\title{Dynamic complexity of Reachability: \texorpdfstring{\\}{}How many changes can we handle?}
\titlerunning{Dynamic complexity of Reachability: How many changes can we handle?}
\author{Samir Datta}{Chennai Mathematical Institute, India}{sdatta@cmi.ac.in}{}{The author was partially funded by a grant from Infosys foundation and SERB-MATRICS grant MTR/2017/000480.}
\author{Pankaj Kumar}{Chennai Mathematical Institute, India \and Department of Applied Mathematics, Charles University in Prague, Czechia}{pankaj@kam.mff.cuni.cz}{}{The author was supported by Czech Science Foundation GA{\v C}R (grant \#19-27871X).}
\author{Anish Mukherjee}{Institute of Informatics, University of Warsaw, Poland}{anish@mimuw.edu.pl}{https://orcid.org/0000-0002-5857-9778}{The author was supported by the ERC CoG grant TUgbOAT no 772346.}
\author{Anuj Tawari}{Chennai Mathematical Institute, India}{atawari@cmi.ac.in}{}{}
\author{Nils Vortmeier}{TU Dortmund, Germany}{nils.vortmeier@tu-dortmund.de}{}{The author acknowledges the financial support by DFG grant SCHW 678/6-2.}
\author{Thomas Zeume}{Ruhr-Universität Bochum, Germany}{thomas.zeume@rub.de}{}{}
\authorrunning{S. Datta, P. Kumar, A. Mukherjee, A. Tawari, N. Vortmeier, T. Zeume}%
\keywords{Dynamic complexity, reachability, complex changes}%
\newcommand{\Reach}{\myproblem{Reach}}
\newcommand{\RelName}[1]{\mtext{#1}}
\newcommand{\tpl}{\bar}
\newcommand{\mtext}[1]{\textsc{#1}}
\newcommand{\schema}{\ensuremath{\sigma}\xspace}
\newcommand{\adom}{\ensuremath{\text{adom}}}
\newcommand{\query}{\ensuremath{Q}}
\newcommand{\db}{\ensuremath{\calI}\xspace}%
\newcommand{\inp}{\ensuremath{\calI}\xspace}
\newcommand{\aux}{\ensuremath{\calA}\xspace}%
\newcommand{\struc}{\calS}
\newcommand{\ans}{\RelName{Ans}}
\newcommand{\restrict}[2]{#1[#2]}
\newcommand{\N}{\ensuremath{\mathbb{N}}}
\newcommand{\Z}{\ensuremath{\mathbb{Z}}}
\newcommand{\R}{\ensuremath{\mathbb{R}}}
\newcommand{\bigO}{\ensuremath{\mathcal{O}}}
\newcommand{\df}{\ensuremath{\mathrel{\smash{\stackrel{\scriptscriptstyle{
    \text{def}}}{=}}}}}
\newcommand  {\myclass} [1]  {\ensuremath{\textsf{\upshape #1}}}
\newcommand{\StaClass}[1]{\myclass{#1}\xspace}
\newcommand{\DynClass}[1]{\myclass{Dyn#1}\xspace}
\newcommand  {\myproblem} [1] {\textsc{#1}\xspace}
\newcommand     {\LOGSPACE}     {\StaClass{LOGSPACE}}
\newcommand     {\NL}   {\StaClass{NL}}
\newcommand     {\AC}   {\StaClass{AC}}
\newcommand{\ACz}{\mbox{\myclass{AC}$^0$}\xspace}
\newcommand{\FO}{\StaClass{FO}}
\newcommand{\FOParityar}{\StaClass{FO+Mod$\:2(\leq,+,\times)$}}
\newcommand{\FOMajar}{\StaClass{FO+Maj$(\leq,+,\times)$}}
\newcommand{\FOar}{\StaClass{FO$(\leq,+,\times)$}}%
\newcommand{\CQ}[1][]{\StaClass{CQ}}
\newcommand{\UCQ}[1][]{\StaClass{UCQ}}
\newcommand{\CQneg}[1][]{\StaClass{CQ\ensuremath{^{\mneg}}}}
\newcommand{\UCQneg}[1][]{\StaClass{UCQ\ensuremath{^{\mneg}}}}
\newcommand{\mneg}{\neg} %
\newcommand{\DynFO}{\DynClass{FO}}
\newcommand{\DynAC}{\DynClass{AC}}
\newcommand{\DynFOar}{\DynClass{FO$(\leq,+,\times)$}}
\newcommand{\DynFOMaj}{\DynClass{FO+Maj}}
\newcommand{\DynFOParity}{\DynClass{FO+Mod$\:2$}}
\newcommand{\FOMod}[1][2]{\StaClass{FO+Mod$\:$#1}}
\newcommand{\DynFOMod}[1][2]{\DynClass{FO+Mod$\:$#1}}
\newcommand{\DynFOParityar}{\DynClass{FO+Mod$\:2(\leq,+,\times)$}}
\newcommand{\DynFOMajar}{\DynClass{FO+Maj$(\leq,+,\times)$}}
\theoremstyle{plain}
\newtheorem{maintheorem}{Main Theorem}
\newtheorem*{question*}{Question}
\newtheorem*{openquestion*}{Open question}
\newenvironment{proofsketch}{\begin{proof}[Proof sketch.]}{\end{proof}}
\newenvironment{proofof}[1]{\begin{proof}[Proof (of #1).]}{\end{proof}}
\newenvironment{proofsketchof}[1]{\begin{proof}[Proof sketch (of #1).]}{\end{proof}}
\newenvironment{proofenum}{\begin{enumerate}[(a)]}{\qedhere \end{enumerate}}%
\providecommand {\calA}      {{\mathcal A}\xspace}
\providecommand {\calB}      {{\mathcal B}\xspace}
\providecommand {\calC}      {{\mathcal C}\xspace}
\providecommand {\calG}      {{\mathcal G}\xspace}
\providecommand {\calI}      {{\mathcal I}\xspace}
\providecommand {\calP}      {{\mathcal P}\xspace}
\providecommand {\calS}      {{\mathcal S}\xspace}
\providecommand {\calT}      {{\mathcal T}\xspace}
\providecommand {\calW}      {{\mathcal W}\xspace}
\newcommand{\prog}{\ensuremath{\calP}\xspace}
\newcommand{\state}{\ensuremath{\struc}\xspace}
\newcommand{\auxSchema}{\ensuremath{\schema_{\text{aux}}}\xspace}
\newcommand{\domain}{\ensuremath{ D}\xspace}
\begin{document}

\maketitle

\begin{abstract}

In 2015, it was shown that reachability for arbitrary directed graphs can be updated by first-order formulas after inserting or deleting single edges. Later, in 2018, this was extended for changes of size $\frac{\log n}{\log \log n}$, where $n$ is the size of the graph. Changes of polylogarithmic size can be handled when also majority quantifiers may be used.

In this paper we extend these results by showing that, for changes of polylogarithmic size, first-order update formulas suffice for maintaining (1) undirected reachability, and (2) directed reachability under insertions. For classes of directed graphs for which efficient parallel algorithms can compute non-zero circulation weights, reachability can be maintained with update formulas that may use ``modulo 2'' quantifiers under changes of polylogarithmic size. 
Examples for these classes include the class of planar graphs and graphs with bounded treewidth. The latter is shown here.

As the logics we consider cannot maintain reachability under changes of larger sizes, our results are optimal with respect to the size of the changes.
\end{abstract}

\section{Introduction}

Suppose we are given a graph $G$ whose edge relation is subjected to insertions and deletions of edges. Which resources are required to update the reachability relation of the graph?

Recently it was shown that if one is allowed to store auxiliary relations, then the reachability relation can be updated after single edge insertions and deletions using first-order logic formulas with access to the graph, the stored relations, and the changed edges \cite{DattaKMSZ18}. In other words, the reachability query is contained in the dynamic complexity class $\DynFO$ \cite{PatnaikI97}. From a database perspective, this means that it can be updated with core-SQL queries;  from the perspective of circuit complexity, this means that reachability can be updated by circuits of polynomial size in constant-time due to the correspondence of first-order logic and $\AC^0$ established by Barrington, Immerman, and Straubing \cite{BarringtonIS90}. 

Understanding single edge insertions and deletions is an important first step. Yet in applications, changes to a graph $G$ often come as bulk set $\Delta E$ of changed edges. It is natural to ask, how large the set $\Delta E$ of edges can be such that reachability can be maintained with the same resources as for single edge changes  -- that is with first-order formulas or, respectively, $\AC^0$ circuits. Using existing lower bounds for circuits \cite{Smolensky1987}, it is easy to see that $\DynFO$ (or $\DynAC^0$, respectively) cannot handle changes of size larger than polylogarithmic for many queries, including the reachability query (see Section \ref{section:barriers} for a more detailed discussion).

The best one can hope for is to maintain reachability with first-order formulas for changes of polylogarithmic size with respect to the size of the graph. In a first step, a subset of the authors showed that reachability can be maintained in $\DynFOar$ under changes of size $O(\frac{\log n}{\log \log n})$ \cite{DattaMVZ18}. Here, the class $\DynFOar$ extends $\DynFO$ by access to built-in arithmetic, which for technical reasons is more natural for bulk changes. Unfortunately, the techniques used in \cite{DattaMVZ18} seem only be able to handle changes of polylogarithmic size in the extension of $\DynFO$ by majority quantifiers, that is, in the class \DynFOMajar.

In this paper we make progress on handling changes of polylogarithmic size in $\DynFO$ by attacking the challenge from two directions.  First, we establish two restrictions for which reachability can be maintained under these changes. 

\begin{maintheorem}
  Reachability can be maintained in $\DynFOar$  under
  \begin{itemize}
   \item insertions of polylogarithmically many edges; and
   \item insertions and deletions of polylogarithmically many edges if the graph remains undirected.
  \end{itemize}
\end{maintheorem}

As second contribution of this paper, we provide a meta-theorem for establishing classes of graphs for which reachability can be maintained under polylogarithmic-size changes with a slight extension of first-order logic. In this extension, $\DynFOParityar$, formulas used for updating the reachability information and the auxiliary relations may use parity quantifiers in addition to the traditional universal and existential quantifiers. 

\begin{maintheorem}
  Reachability can be maintained in $\DynFOParityar$ under insertions and deletions of polylogarithmically many edges on classes of graphs for which polynomially bounded non-zero circulation weights can be computed in $\AC$.
\end{maintheorem}

Here a weighting function for the edges of a graph has \emph{non-zero circulation}, if the weight of every directed cycle is non-zero (see Section~\ref{section:isolation} for details). The class $\AC$ contains queries computable by circuits of polynomial size and polylogarithmic depth. 
Examples for graph classes for which non-zero circulation weights can be computed in \AC include the class of planar graphs and graphs with bounded treewidth. The latter is shown here.
We note that isolating weights, a concept closely related to non-zero circulation weights, have been used previously in dynamic complexity for establishing that reachability is in non-uniform $\DynFOParityar$ under single edge changes \cite{DattaHK14}, a precursor result to reachability~in~$\DynFO$. 

For our results, we employ two techniques of independent interest. The first technique relies on the power of first-order logic on structures of polylogarithmic size. It is well-known that  reachability can be computed by a uniform circuit family of size $N^{\bigO(N^{1/d})}$ and depth $2d$. An immediate consequence is that all $\NL$-queries can be expressed by first-order formulas for graphs with $n$ nodes but only polylogarithmically many edges.  Thus, for maintaining a query under changes $\Delta E$ of polylogarithmic size, a dynamic program can (1) do an arbitrary $\NL$-computation on $\Delta E$, and (2) update the auxiliary data by combining the computed information with the previous auxiliary data using a first-order formula.

The second technique we rely on is a slight generalization of the ``Muddling Lemma'' from~\cite{DattaMSVZ19}. The Muddling Lemma reduces the requirements for proving that a query is in $\DynFO$: a query is in $\DynFO$ if, essentially, one can update the query for polylog many steps starting from auxiliary data precomputed in $\AC$. Here we observe that this can be strengthened for changes of polylogarithmic size: a query is in $\DynFO$ if, essentially,  one can update the query under one polylogarithmic-size change from auxiliary data precomputed in~$\AC$.

Parts of the results presented here have been included in the PhD thesis of Nils Vortmeier~\cite{Vortmeier19}.

\subparagraph*{Outline} After recalling the dynamic complexity framework in Section \ref{section:framework}, we shortly outline barriers for the size of bulk changes in Section \ref{section:barriers} and recall useful techniques in Section \ref{section:tools}. Afterwards we present our results for $\DynFO$ in Section \ref{section:dynfo} and for $\DynFOParityar$ in Section \ref{section:isolation}.

\section{The dynamic setting}
\label{section:framework}
We briefly repeat the essentials of dynamic complexity, closely following \cite{DattaMVZ18} which in turn builds on \cite{SchwentickZ16}. The goal of a \emph{dynamic program} is to answer a given query on a relational \emph{input structure} subjected to changes that insert tuples into the input relations or delete tuples from them. The program may use  auxiliary information represented by an \emph{auxiliary structure} over the same domain as the input structure. Initially, both input and auxiliary structure are empty; and the domain is fixed during each run of the program. Whenever a change to the input structure occurs, the auxiliary structure is updated by means of first-order formulas.

\subparagraph*{Changes}

For a (relational) structure $\db$ over domain $\domain$ and schema $\schema$, a change $\Delta \db$ consists of sets $R^{+}$ and $R^{-}$ of tuples for each relation symbol $R \in \schema$. The result $\db + \Delta \db$ of an application of the change $\Delta \db$  to $\db$ is the input structure where $R^{\db}$ is changed to $(R^{\db} \cup R^{+}) \setminus R^{-}$. The \emph{size} of $\Delta \db$ is the total number of tuples in relations $R^{+}$ and $R^{-}$ and the set of \emph{affected elements} is the (active) domain of tuples in $\Delta \db$.

\subparagraph*{Dynamic Programs and Maintenance of Queries} A dynamic program consists of a set of update rules that specify how auxiliary relations are updated after changing the input structure. Let $\db$ be the current input structure over schema $\schema$ and let $\aux$ be the auxiliary structure over some schema $\auxSchema$.
An \emph{update rule} for updating an $\ell$-ary auxiliary relation $T$ after a change is a first-order formula $\varphi$ over schema $\schema \cup \auxSchema$ with $\ell$ free variables. After a change $\Delta \db$, the new version of $T$ is $T \df \{ \tpl a \mid (\db + \Delta \db, \aux) \models \varphi(\tpl a)\}$, so, the updated auxiliary relation includes all tuples $\tpl a$ such that $\varphi(\tpl a)$ is satisfied when it is evaluated on the changed input structure and the old auxiliary structure. 
Note that a dynamic program can choose to have access also to the old input structure by storing it in its auxiliary relations. %

For a state $\state = (\db, \aux)$ of the dynamic program $\prog$ with input structure $\db$ and auxiliary structure $\aux$ we denote the state of the program after applying a change sequence $\alpha$ and updating the auxiliary relations accordingly by $\prog_\alpha(\state)$. 

The dynamic program \emph{maintains} a $q$-ary query $\query$ under changes of size $k$ if it has a $q$-ary auxiliary relation $\ans$ that at any time stores the result of $\query$ applied to the current input structure. More precisely, for each non-empty sequence $\alpha$ of changes of size $k$, the relation $\ans$ in $\prog_\alpha(\state_\emptyset)$ and $\query(\alpha(\db_\emptyset))$ coincide, where the state $\state_\emptyset \df (\db_\emptyset, \aux_\emptyset)$ consists of an input structure $\db_\emptyset$ and an auxiliary structure $\aux_\emptyset$ over some common domain that both have empty relations, and $\alpha(\db_\emptyset)$ is the input structure after applying $\alpha$. 

If a dynamic program maintains a query, we say that the query is in $\DynFO$. Similarly to \DynFO one can define variants with built-in auxiliary relations and with more powerful update formulas. For instance, the class \DynFOar contains queries that can be maintained by first-order update formulas with access to three particular auxiliary relations $<, +, $ and $\times$ which are initialized as a linear order and the corresponding addition and multiplication relations;  in the class $\DynFOMod[p]$, update formulas may use modulo-$p$-quantifiers in addition to existential and universal quantifiers.

We state our results for dynamic classes with access to the arithmetic relations $\leq, +$ and $\times$. Handling bulk changes without access to arithmetic leads to technical issues which distract from the fundamental dynamic properties. See \cite{DattaMVZ18, Vortmeier19} for further discussions on this topic and how our results can be stated for \DynFO in an adapted setting which takes these technical issues into account. 

For the construction of dynamic programs in this paper we assume that changes either only insert edges or only delete edges. This is no restriction, as corresponding update formulas can be combined to process a change that inserts and deletes edges at the same time, by first processing the inserted edges and then processing the deleted edges.

\section{Barriers for the size of bulk changes}
\label{section:barriers}
In the following we outline why it is not possible to maintain reachability under changes of larger than polylogarithmic size with first-order formulas, even in the presence of parity quantifiers. 

The idea is simple. A classical result by Smolensky states that for computing the number of ones modulo a prime $q$ occurring in a bit string of length $n$, an $\AC[p]$ circuit of depth $d$ requires $2^{\Omega(n^{1/2d})}$ many gates, for each prime $p$ distinct from $q$ (see \cite{Smolensky1987} or, for a modern exposition, \cite[Theorem 12.27]{Jukna2012}). A simple, well-known reduction yields that deciding reachability for graphs with $n$ edges which are disjoint unions of paths  %
also requires $\AC[p]$ circuits of size~$2^{\Omega(n^{1/2d})}$. Indeed, computing the number of ones in $w = a_1 \cdots a_n$ modulo $q$ can be reduced  to reachability as follows. Consider the graph with nodes $\{(i, k) \mid 1 \leq i \leq n+1 \text{ and }  0 \leq k < q\}$ and edges $\{((i, k), (i+1, k)) \mid a_i = 0\} \cup \{((i, k), (i+1, k+1 \bmod q)) \mid a_i = 1\}$. It is easy to see that (i) the graph has $\bigO(n)$ edges and is a disjoint union of $q$ paths, 
and (ii) there is a path from $(1, 0)$ to $(n+1, 0)$ if and only if the number of ones in $w$ is $0$ modulo $q$.  

These lower bounds for circuit sizes immediately translate into lower bounds for first-order formulas with modulo $p$ quantifiers via the correspondence due to Barrington, 
Immerman, and Straubing \cite{BarringtonIS90}.

\begin{theorem}
  Let $f(n) \in \log^{\omega(1)} n$ be a function from $\N$ to $\N$ and let $p$ be a prime. There is no \FOMod[p] formula with access to built-in relations that defines 
  \begin{enumerate}
   \item whether the size of a unary relation $U$ with $|U| \leq  f(n)$ is divisible by $q$, for primes $q$ distinct from $p$;  
   \item reachability in graphs with at most $f(n)$ edges, even for disjoint unions of paths. 
  \end{enumerate}
\end{theorem}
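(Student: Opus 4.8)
The plan is to derive both statements from Smolensky's size lower bound for $\AC[p]$ circuits computing counting modulo $q$ \cite{Smolensky1987}, pulled over to logic via the correspondence due to Barrington, Immerman, and Straubing \cite{BarringtonIS90} and, for the reachability part, via the reduction to disjoint unions of paths described above. As a preliminary step I would replace $f$ by $n \mapsto \min(f(n), n)$, which is still in $\log^{\omega(1)} n$ --- for each constant $k$ both $f(n)$ and $n$ eventually exceed $\log^{k} n$, hence so does their minimum --- and which in both items only shrinks the set of admissible inputs; so a formula witnessing the failure of either item for $f$ also witnesses it for the truncated function. Hence I assume $f(n) \le n$ from now on.

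For item~(1) I would in fact prove the parametrised statement that \emph{for every $g \in \log^{\omega(1)} n$ with $g(n) \le n$, no \FOMod[p] formula with built-in relations defines, on structures $(\{1, \dots, n\}, U)$ with $U \subseteq \{1, \dots, g(n)\}$, whether $q$ divides $|U|$} --- item~(1) being the instance $g = f$, since a formula correct for all $U$ with $|U| \le f(n)$ is in particular correct for all $U \subseteq \{1, \dots, f(n)\}$. Suppose such a formula $\varphi$ existed. By \cite{BarringtonIS90} it is computed, on domains of size $n$, by an $\AC[p]$ circuit $C_n$ with at most $n^{c}$ gates and depth $d$, where $c$ and $d$ depend only on $\varphi$ and the $n$ Boolean inputs of $C_n$ are the membership bits of $U$. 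Hard-wiring the inputs indexed by $g(n)+1, \dots, n$ to $0$ yields a circuit of depth $d$ with at most $n^{c}$ gates that, on its $g(n)$ remaining inputs, decides divisibility by $q$ of the number of ones in a length-$g(n)$ string. As $q \ne p$, Smolensky's theorem forces this circuit to have $2^{\Omega(g(n)^{1/2d})}$ gates; but $\log^{\omega(1)} n$ is closed under raising to the fixed power $\frac{1}{2d}$, so $g(n)^{1/2d} = \omega(\log n)$ and $2^{\Omega(g(n)^{1/2d})} = n^{\omega(1)}$, contradicting the bound $n^{c}$ for all large $n$.

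For item~(2) I would compose the reduction recalled above with a hypothetical reachability formula. Given $(\{1, \dots, n\}, U)$, encode a pair $(i, k)$ with $1 \le i \le m+1$ and $0 \le k < q$ by the number $q(i-1) + k \in \{0, \dots, q(m+1)-1\}$; then, using the built-in arithmetic relations, the source $(1,0)$, the target $(m+1, 0)$, and the edge set $\{((i,k),(i+1,k)) \mid i \notin U\} \cup \{((i,k),(i+1,k+1 \bmod q)) \mid i \in U\}$ are all first-order definable from $(\{1, \dots, n\}, U, \leq, +, \times)$. Choosing $m := \lfloor f(n)/q \rfloor - 1 = \Theta(f(n))$, the resulting graph is a disjoint union of $q$ paths with at most $f(n)$ edges and, since $f(n) \le n$, at most $n$ vertices, and it has a path from the source to the target exactly when $q$ divides $|U \cap \{1, \dots, m\}|$. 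Consequently, a \FOMod[p] formula with built-in relations defining reachability on disjoint unions of paths with at most $f(n)$ edges would, after substituting the reduction, give a \FOMod[p] formula with built-in relations that defines, on structures $(\{1, \dots, n\}, U)$ with $U \subseteq \{1, \dots, m\}$, whether $q$ divides $|U|$ --- contradicting the parametrised statement above for $g := m$, which is in $\log^{\omega(1)} n$ because $m = \Theta(f(n))$.

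The proof is essentially an assembly of known ingredients, so the only delicate point is the arithmetic of growth rates: one has to verify that $f \in \log^{\omega(1)} n$ is preserved both under taking a constant-th root and under the ensuing exponentiation, so that $2^{\Omega(g(n)^{1/2d})}$ genuinely dominates every polynomial $n^{c}$. A secondary, purely routine point is that the reduction for item~(2), once encoded over a domain of size $n$, stays first-order and keeps the constructed graph within the allowed numbers of vertices and edges.
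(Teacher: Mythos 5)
Your proposal is correct and follows essentially the same route as the paper: Smolensky's $\AC[p]$ size lower bound for counting modulo $q$, transferred to \FOMod[p] with built-in relations via the Barrington--Immerman--Straubing correspondence, together with the same disjoint-union-of-paths encoding for reachability. The only (harmless) difference is that you carry out the paths reduction as a first-order interpretation composed with the hypothetical reachability formula, whereas the paper applies the same reduction at the circuit level before invoking the correspondence; your truncation of $f$ and the growth-rate bookkeeping are routine and sound.
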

\begin{proofsketch}
\begin{proofenum}
\item
Let $f(n)$ be some function from $\log^{\omega(1)} n$. Suppose, towards a contradiction, that there is an \FOMod[p] formula with access to built-in relations that defines whether the size of a unary relation $U$ with $|U| \leq  f(n)$ is divisible by $q$, for some primes $p \neq q$. 
Then, by \cite{BarringtonIS90}, for every $n$ there is an $\AC[p]$ circuit of some fixed depth $d$ that decides that question for inputs if size $n$, and the size of this circuit is polynomial in $n$.
That is a contradiction, as by Smolensky's lower bound every such circuit needs to have size $2^{\Omega(f(n)^{1/2d})} = 2^{\log(n)^{\omega(1)}} = n^{\omega(1)}$.
\item
This part can be proven analogously to Part (a), using the circuit lower bound for graph reachability.
\end{proofenum}
\end{proofsketch}

Those lower bounds have the immediate consequence that $\DynFO$ cannot deal with bulk changes of larger than polylogarithmic size. Indeed,  from any formula that updates the result of a query after an insertion of $f(n)$ tuples into an initially empty input relation one can construct a formula that defines the query for inputs of size $f(n)$.

\begin{corollary}\label{cor:barrier}
  Let $f(n) \in \log^{\omega(1)} n$ be a function from $\N$ to $\N$ and let $p$ be a prime. Then the following queries cannot be maintained  in $\DynFOMod[p]$  for bulk changes of size $\leq f(n)$, even if the auxiliary relations may be initialized arbitrarily:
  \begin{enumerate}
   \item divisibility of the size of a unary relation by a prime $q \neq p$, and
   \item reachability in graphs, even if restricted to disjoint unions of paths.
  \end{enumerate}
\end{corollary}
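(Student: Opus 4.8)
The plan is to derive this directly from the theorem above by a standard ``one-shot change'' argument, so the actual work is essentially bookkeeping. Fix a prime $p$ and a function $f(n) \in \log^{\omega(1)} n$, and suppose for contradiction that some $\DynFOMod[p]$ program $\prog$, with an arbitrarily chosen initialization of its auxiliary relations, maintains one of the two queries under bulk changes of size $\leq f(n)$. I would observe that a single bulk change of size $\leq f(n)$ already suffices to turn the empty input structure into an \emph{arbitrary} instance whose input relations together contain at most $f(n)$ tuples: just insert exactly those tuples. After this one change, the maintenance property guarantees that the update rule $\varphi$ for the answer relation $\ans$, evaluated on the pair consisting of the new input structure and the initial auxiliary structure $\aux_\emptyset$, already produces the query result on that instance.

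The key point to spell out is that $\aux_\emptyset$ depends only on the domain size $n$, hence its relations are legitimate built-in relations in the sense of the theorem above. Absorbing them, together with the arithmetic relations $\leq, +, \times$, into the built-in relations, $\varphi$ becomes an \FOMod[p] formula with access to built-in relations that defines the query on all inputs of size $\leq f(n)$ over an $n$-element domain. For the first query, taking the single input relation to be the unary relation $U$, this contradicts part~(1) of the theorem above; for reachability, taking the input relation to be the edge relation and noting that the instances produced this way can be restricted to disjoint unions of paths, it contradicts part~(2).

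I do not expect a genuine obstacle here; the only subtlety worth stating is the identification of the program's (input-independent) initialization with a family of built-in relations, which is exactly why the theorem above is phrased to allow arbitrary built-in relations. One should also note in passing that the argument only needs the program to handle a change whose size equals the number of inserted tuples, i.e.\ any value up to $f(n)$, which is precisely what ``changes of size $\leq f(n)$'' provides.
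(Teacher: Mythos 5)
Your proposal is correct and follows essentially the same route as the paper, which reduces the dynamic lower bound to the static one by observing that a single bulk insertion into the initially empty structure forces the update formula for $\ans$, evaluated over the new input and the (input-independent) initial auxiliary relations, to define the query on instances of size at most $f(n)$, with the initialization and arithmetic absorbed as built-in relations. Your added remarks on treating the initialization as built-in relations and on changes of size up to $f(n)$ are exactly the bookkeeping the paper leaves implicit.
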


\section{Techniques and Tools}
\label{section:tools}
In the previous section we recalled that $\FOar$, even if equipped with modulo $p$ quantifiers, is not very expressive in general. That changes when we are only interested in small substructures of our input: $\FOar$ can express every $\NL$-computable query on subgraphs of polylogarithmic size.

\begin{theorem}\label{theorem:nl-fo}
Let $k$ and $c$ be arbitrary natural numbers, and let $\query$ be a $k$-ary, $\NL$-computable graph query. There is an \FOar formula $\varphi$ over schema $\{E, D\}$ such that for any graph $G$ with $n$ nodes, any subset $D$ of its nodes of size at most $\log^c n$ and any $k$-tuple $\tpl a \in {D}^k$: $\tpl a \in \query(\restrict{G}{D})$ if and only if $(G, D) \models \varphi(\tpl a)$. Here, $\restrict{G}{D}$ denotes the subgraph of $G$ induced by~$D$.
\end{theorem}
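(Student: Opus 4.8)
The plan is to reduce $\query$ on $\restrict{G}{D}$ to a reachability question in a graph of polylogarithmic size, and then to express that reachability in \FOar, using the arithmetic built-ins for bit-surgery on short encodings. Since $\query$ is $\NL$-computable and $\NL$ is closed under complement, every such query is first-order reducible to reachability: there is a fixed first-order interpretation $\tau$ (whose output vertices are $r$-tuples over the input, for a constant $r$) and first-order definable vertices $\bar s(\bar x),\bar t(\bar x)$ such that, for every graph $G'$ and tuple $\bar a$, we have $\bar a\in\query(G')$ iff the graph $H:=\tau(G')$ has a path from $\bar s(\bar a)$ to $\bar t(\bar a)$; equivalently one may take $H$ to be the configuration graph of a logarithmic-space machine for $\query$. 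Applied to $G'=\restrict{G}{D}$ this $H$ has at most $|D|^r\le\log^{cr}n$ vertices, and every atomic fact about $H$ unfolds through $\tau$ into a fixed first-order statement over $G$, $D$ and arithmetic. So it suffices to define reachability in a graph with only polylogarithmically many vertices -- provided we can name those vertices cheaply.

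Naming is the one real obstacle: the vertices of $\restrict{G}{D}$ live in the large domain $\{1,\dots,n\}$, the low-depth reachability computation needs to pack several vertices into a single domain element, and the obvious remedy -- renaming $D$ by rank -- is a counting task not definable in \FOar. I circumvent this by hashing. By a standard construction there is an \FOar-definable family $(h_a)_a$ of functions, with $a$ ranging over a set of size $n^{\bigO(1)}$ (hence guessable with a bounded number of first-order variables, and with no primality tests), such that for every $D$ with $|D|\le\log^c n$ some $a$ makes $h_a$ injective on $D$ with $h_a(D)\subseteq\{1,\dots,\log^{2c}n\}$; such an $a$ exists because the range exceeds $\binom{|D|}{2}$ (a collision/birthday bound), and from $a$ and $D$ both $h_a$ and its partial inverse on $D$ are \FOar-definable. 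The formula $\varphi$ guesses such an $a$ and thereafter names every vertex of $\restrict{G}{D}$, and hence (through $\tau$) every vertex of $H$, by an $\bigO(\log\log n)$-bit string. Writing $M:=\lceil\log n\rceil^{cr}\ge|V(H)|$ (an \FOar-definable bound), a block of $\lceil M^{1/d}\rceil$ vertices of $H$ then occupies only $\bigO(M^{1/d}\log\log n)=\bigO((\log n)^{cr/d}\log\log n)$ bits, which for $d\ge cr+1$ is $o(\log n)$ and so fits into a single domain element.

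With short names in hand, I express reachability in $H$ by unrolling the familiar recursion that splits a path of length $\le M$ into $d$ pieces of length $\le M^{1/d}$. Let $R_0(\bar u,\bar v)$ say ``$\bar u=\bar v$ or $(\bar u,\bar v)$ is an edge of $H$'' (decoded via $\tau$ and $h_a^{-1}$), and for $0\le i<d$ let $R_{i+1}(\bar u,\bar v)$ say ``some single domain element encodes a sequence $\bar u=\bar w_0,\dots,\bar w_{\lceil M^{1/d}\rceil}=\bar v$ of $H$-vertices with $R_i(\bar w_j,\bar w_{j+1})$ for all $j<\lceil M^{1/d}\rceil$'', the ``for all $j$'' being a bounded quantifier and $\bar w_j$ extracted from the encoding by \FOar bit-surgery. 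Since $(\lceil M^{1/d}\rceil)^d\ge M\ge|V(H)|$, the relation $R_d(\bar s(\bar a),\bar t(\bar a))$ is exactly reachability; substituting the definition of $R_i$ into that of $R_{i+1}$ for $i=d-1,\dots,0$ collapses everything into one \FOar formula of quantifier rank $\bigO(d)=\bigO(1)$, and prefixing the guess of $a$ and the $\tau$-unfolding yields $\varphi(\bar a)$. For the finitely many $n$ below the threshold where ``$d\ge cr+1$ makes a block fit'' fails, $[n]$ has bounded size and any query over a bounded domain is first-order definable, and ``$n$ is below the threshold'' is \FOar-definable, so we simply disjoin that bounded-case formula. (Alternatively, the recursion can be replaced by the known uniform reachability circuit of size $M^{\bigO(M^{1/d})}$ and depth $2d$, which for $M=\mathrm{polylog}(n)$ and $d\ge cr+1$ has polynomial size and constant depth and is therefore \FOar-definable by Barrington--Immerman--Straubing; the hashing step is still needed, to present $H$ as an adjacency matrix indexed by short names.) The step I expect to be the main obstacle is exactly this name compression: checking that the hash family, its inverse on $D$, and all the attendant bit-extractions are genuinely \FOar over $\leq,+,\times$ alone, and that $\log^{2c}n$ is simultaneously short enough to keep blocks sub-logarithmic and wide enough for the collision bound to furnish an injective seed for \emph{every} admissible $D$.
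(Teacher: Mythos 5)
Your proposal is correct in substance and follows the same two-step plan as the paper: first reduce the $\NL$-computable query to reachability via $\NL$-completeness under \FOar-reductions (which blow up a polylog-size instance only polylogarithmically), then exploit the classical divide-and-conquer that decides reachability on $N$-vertex graphs in depth $2d$ and size $N^{\bigO(N^{1/d})}$, which becomes polynomial in $n$ once $N \leq \log^{c} n$ and $d$ is chosen large enough. The difference is in how the second step is discharged: the paper simply cites the known uniform circuit family and converts it to an \FOar formula via the Barrington--Immerman--Straubing correspondence, whereas you re-implement the same recursion directly as a formula, introducing an \FOar-definable hash family to give the vertices of the small instance $\bigO(\log\log n)$-bit names so that a block of $M^{1/d}$ intermediate vertices fits into a single domain element. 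This buys an explicit treatment of an encoding issue the paper leaves implicit (how the induced polylog-size instance, sitting inside an $n$-element domain, is presented to the small-input construction), at the price of extra machinery; note, though, that your motivating aside is shaky: ranking a set of polylogarithmic size \emph{is} achievable in uniform \ACz, i.e.\ in \FOar, by the known results on counting up to polylog in constant depth, so the hashing detour is a convenience rather than a necessity, and your specific range bound $\log^{2c} n$ is only correct up to the choice of constants. Also, your unfolding of the interpretation $\tau$ over $(G,D)$ quietly needs order/arithmetic on the small instance (or on your hash names) for the configuration-graph construction; this is at the same level of informality as the paper's own appeal to \FOar-reductions and can be patched, but it deserves a sentence if you write this up in full.
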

\begin{proof}
We prove the result for the reachability query. As reachability is $\NL$-complete under $\FOar$-reductions \cite{ImmermanDC}, and every $\FOar$-reduction maps an instance of size $\log^c n$ to an instance of size $\log^{cd} n$ for a fixed $d \in \N$, the full result follows.

It is well-known (see for example \cite[p. 613]{ChenOST16}), that for every $d \in \N$ there is a uniform circuit family for reachability where the circuit for inputs of size $N$ has depth $2d$ and size $N^{\bigO(N^{1/d})}$.
Suppose the input size $N$ is only $\log^c n$, for some $c \in \N$ and pick $d \df 2c$.
Then the circuit size
\begin{align*}
 N^{\bigO(N^{1/d})} = & (\log^c n)^{\bigO((\log^c n)^{1/d})} \\
  = & (\log n)^{c \bigO((\log n)^{c/d})} = (\log n)^{\bigO((\log n)^{c/2c})} = (\log n)^{\bigO(\sqrt{\log n})}\\
  = & 2^{\bigO(\log \log n \sqrt{\log n})} \subseteq 2^{\bigO(\sqrt{\log n} \sqrt{\log n})} = 2^{\bigO(\log n)} = n^{\bigO(1)}
\end{align*}
is polynomial in $n$, so, the circuit is a uniform $\ACz$ circuit for reachability for graphs of size $\log^c n$.
The existence of $\varphi$ follows by the equivalence of uniform \ACz and $\FOar$ \cite{BarringtonIS90}.
\end{proof}

The Muddling Lemma simplifies the maintenance of queries under single edge changes~\cite{DattaMSVZ19}. It states that for many natural queries $\query$, in order to show that $\query$ can be maintained, it is enough to show that the query can be maintained for a bounded number of steps. In the following we recall the necessary notions and extend the lemma to bulk changes. 

A query $\query$ is \emph{almost domain-independent} if there is a $c \in \N$ such that $\restrict{\query(\calA)}{(\adom(\calA) \cup B)} = \query(\restrict{\calA}{(\adom(\calA) \cup B)})$ for all structures $\calA$ and sets $B \subseteq A \setminus \adom(\calA)$ with $|B| \geq c$. Here, $\adom(\calA)$ denotes the \emph{active domain}, i.e.\ the set of domain elements that are used in some tuple of $\calA$.
A query $\query$ is \emph{$(\calC,f)$-maintainable}, for some complexity class $\calC$ and some function~\mbox{$f:\N\to\R$}, if there is a dynamic program $\prog$ and a $\calC$-algorithm $\mathbb A$ such that for each input structure $\db$ over a domain of size $n$, each linear order $\leq$ on the domain, and each change sequence $\alpha$ of length $|\alpha| \leq f(n)$, the relation $Q$ in $\prog_\alpha(\state)$ and $\query(\alpha(\db))$ coincide, where $\state = (\inp, \mathbb A(\inp,\leq))$.

The Muddling Lemma from \cite{DattaMSVZ19} has been formulated for bulk changes in \cite{DattaMVZ18,Vortmeier19}\footnote{The statement and proof in \cite{DattaMVZ18} is slightly flawed and has been corrected in \cite{Vortmeier19}.}. %
\begin{theorem}[\cite{DattaMVZ18, Vortmeier19}]\label{theorem:fewerChanges}
Let $\query$ be an \NL-computable, almost domain independent query, and let $c \in \N$ be arbitrary.
If the query $\query$ is $(\AC^d,\log^d n)$-maintainable under changes of size $\log^c n$ for some $d \in \N$, then $\query$ is in $\DynFOar$ under changes of size $\log^c n$.
\end{theorem}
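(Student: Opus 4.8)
The plan is to prove Theorem~\ref{theorem:fewerChanges} by a standard ``buffer-and-recompute'' argument: we run several copies of the given $(\AC^d,\log^d n)$-maintainable program in a round-robin fashion, each copy being restarted periodically on a freshly precomputed auxiliary structure, so that at every point in time at least one copy has been running for at most $\log^d n$ steps. The subtle point, and the reason we need Theorem~\ref{theorem:nl-fo}, is that the precomputation $\mathbb A$ is only an $\AC^d$-algorithm, which a $\DynFO$ program cannot simulate directly in one step; instead we accumulate the recent changes into a buffer and, because the buffer has only polylogarithmically many tuples, let the $\DynFO$ program recompute the required auxiliary data from scratch using the power of first-order logic on polylog-size substructures.

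First I would set up the bookkeeping. Fix $d$ and let $L \df \log^d n$. We maintain, in the auxiliary structure, $t \df 2\lceil L / \log^c n\rceil + O(1)$ many ``threads'', each an independent run of the dynamic program $\prog$ from Theorem~\ref{theorem:fewerChanges}. Thread $j$ is associated with a ``reference input'' $\db_j$ (the input structure as it was when thread $j$ was last restarted) and a counter $\text{age}_j$ counting how many changes thread $j$ has processed since its restart. The restarts are staggered so that the ages are spread out evenly over $\{0,1,\dots,L\}$; concretely, thread $j$ is restarted whenever the global change counter is congruent to a fixed residue modulo (roughly) $L$, and at every moment the youngest thread has age $\le \log^c n$ (one change ago it was being restarted) while some thread has age in the valid window. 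The answer relation $\ans$ is copied from any thread whose age is $\le L$; correctness of that thread's answer follows from $(\AC^d,\log^d n)$-maintainability, since it was started from $\mathbb A(\db_j,\le)$ and has since processed at most $L$ genuine changes.

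The heart of the argument is the restart step, which must be carried out by a first-order update formula. When thread $j$ is restarted at global time $\tau$, it must be initialized with $\mathbb A(\db_{\text{cur}},\le)$, the $\AC^d$-precomputation applied to the \emph{current} input structure. We cannot compute this in $\FO$ in general. However, we arrange that the reference input of thread $j$ \emph{one restart-cycle ago} differs from $\db_{\text{cur}}$ by only the changes applied in between, which number at most $(L / \log^c n + 1)\cdot\log^c n \le L + \log^c n = \log^{O(d+c)} n$ tuples; we keep this bounded set of ``pending changes'' $\Delta_j$ explicitly in an auxiliary relation. So the task becomes: given the old precomputed auxiliary data $\mathbb A(\db_j^{\text{old}},\le)$ and the polylog-size set $\Delta_j$ of changes, compute $\mathbb A(\db_j^{\text{old}}+\Delta_j,\le)$. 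Since $\mathbb A \in \AC^d$ and reachability (hence any $\NL$-computable auxiliary function, and $\AC^d \subseteq \ldots$ — more precisely, $\mathbb A$'s output on the changed structure depends on the old structure plus $\Delta_j$ in an $\NL$-computable way once we observe that the relevant combination is itself captured by Theorem~\ref{theorem:nl-fo}) — here I would invoke Theorem~\ref{theorem:nl-fo}: the map taking $(\text{old auxiliary data}, \Delta_j)$ to the updated auxiliary data is, restricted to the polylog-size ``difference'', computable by an $\FOar$ formula. Thus the restart can be performed by an $\FOar$ update formula. (One mild wrinkle: almost domain-independence of $\query$ is used to ensure that working on the active domain plus a small padding does not change the query result, so that restricting attention to polylog-size substructures is legitimate.)

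The main obstacle I anticipate is precisely making the previous paragraph rigorous: $\mathbb A$ is an arbitrary $\AC^d$-algorithm, not obviously expressible as ``old data $+$ small change $\mapsto$ new data'' via an $\NL$-computation on a polylog-size instance. The clean way around this, which I expect the authors also take, is \emph{not} to recompute $\mathbb A$ incrementally at all, but to note that $\mathbb A(\db_{\text{cur}},\le)$ can be obtained by first-order interpretation from $\db_{\text{cur}}$ restricted to the active domain of the \emph{accumulated changes since the start of the whole run} together with a polylog-size initial segment — but the active domain of all changes can be large. So instead one stores, for each thread and at each restart, enough of the $\AC^d$-precomputation so that the needed update is genuinely a polylog-size computation; alternatively one restarts each thread ``lazily'' over $\log^d n / \log^c n$ substeps, performing a polylog-size chunk of the $\AC^d$ circuit's evaluation per substep, using Theorem~\ref{theorem:nl-fo} to evaluate each chunk in $\FO$. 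Balancing the number of threads, the restart period, and the per-step work so that everything fits is the real content; once the schedule is pinned down, correctness is a routine invariant argument and the $\FOar$-definability of each piece follows from Theorem~\ref{theorem:nl-fo} and closure of $\FOar$ under first-order operations.
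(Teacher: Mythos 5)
First, note that the paper does not prove Theorem~\ref{theorem:fewerChanges} at all: it is imported from \cite{DattaMVZ18,Vortmeier19} (with the footnote that the proof in \cite{DattaMVZ18} was flawed and corrected in \cite{Vortmeier19}), so your attempt has to be measured against the bulk-change ``muddling'' argument of those references. Your skeleton -- staggered copies of $\prog$, each restarted from freshly precomputed auxiliary data so that some copy has always processed at most $\log^d n$ changes since its initialization -- is indeed the right skeleton, but the step you yourself flag as the crux is a genuine gap. A restarted thread cannot be initialized by ``FO-updating'' the output of $\mathbb A$: Theorem~\ref{theorem:nl-fo} only says that an \NL-computable query is \FOar-definable on an induced substructure of polylogarithmic size; it says nothing about the map that takes the old output of an \emph{arbitrary} $\AC^d$ precomputation together with a polylog-size change of the input to the new output, and there is no reason this map should be first-order at all. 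If it were, the theorem would be almost trivial (a single thread restarted at every step would do), so invoking Theorem~\ref{theorem:nl-fo} there cannot supply the missing justification.

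Your fallback -- restarting a thread ``lazily'' over $\log^d n/\log^c n$ substeps, evaluating a ``polylog-size chunk'' of the circuit per substep -- is mis-calibrated and still leaves the real work undone. The circuit for $\mathbb A$ has polynomial size and depth $\Theta(\log^d n)$; a single \FOar update step can simulate only constantly many layers of it (simulating $\omega(1)$ layers in \FOar would contradict the lower bounds recalled in Section~\ref{section:barriers}), so the lazy evaluation of $\mathbb A$ on a snapshot necessarily occupies $\Theta(\log^d n)$ update steps, not $\log^d n/\log^c n$. This forces exactly the two ingredients that are absent from your write-up and that constitute the content of the cited proof: (i) a catch-up phase, since by the time the simulation of $\mathbb A$ on the snapshot finishes, the snapshot is $\Theta(\log^d n)$ bulk changes stale; the thread must replay the buffered changes at an accelerated constant rate (composing a fixed number of applications of $\prog$'s update formulas per step, which is still first-order) while new changes keep arriving, with the constants tuned so that the total number of changes it processes stays within its budget of $\log^d n$; and (ii) the initial segment of the run, before any thread has completed its start-up, during which the entire input has only polylogarithmically many tuples and the \NL-computable query is answered directly via Theorem~\ref{theorem:nl-fo} -- this, together with almost domain-independence, is where the hypotheses on $\query$ are actually used, whereas in your proposal they appear only as a parenthetical remark. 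Without the catch-up mechanism and the initial-phase argument, the round-robin bookkeeping alone does not yield a \DynFOar program.
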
  

The previous theorem can be strengthened as follows.
\begin{restatable}{theorem}{theoMuddling}
\label{theorem:oneshotmuddling}
Let $\query$ be an \NL-computable, almost domain independent query, and let $c \in \N$ be arbitrary.
If  the query $\query$ is $(\AC^d,1)$-maintainable under changes of size $\log^{c+d} n$ for some $d \in \N$, then $\query$ is in $\DynFOar$ under changes of size $\log^c n$.
\end{restatable}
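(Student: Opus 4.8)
The plan is to reduce this to Theorem~\ref{theorem:fewerChanges}: it suffices to show that if $\query$ is $(\AC^d,1)$-maintainable under changes of size $\log^{c+d}n$, then $\query$ is $(\AC^d,\log^d n)$-maintainable under changes of size $\log^c n$, since the latter is precisely the hypothesis of Theorem~\ref{theorem:fewerChanges}. The key observation is that a sequence of at most $\log^d n$ changes, each of size at most $\log^c n$, touches at most $\log^d n\cdot\log^c n=\log^{c+d}n$ tuples in total, so the cumulative effect of the whole sequence on the initial structure is a single change of size at most $\log^{c+d}n$. Consequently, rather than carrying auxiliary data step by step through the change sequence, the new program will, after every individual change, \emph{replay} the given one-shot program on the \emph{initial} structure together with this cumulative change; a single application of the one-shot program's update formulas is enough precisely because the one-shot program is guaranteed to be correct after one change.

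Concretely, let $\prog$ and $\mathbb A$ witness $(\AC^d,1)$-maintainability of $\query$ under changes of size $\log^{c+d}n$. Build $\prog'$ and an $\AC^d$-algorithm $\mathbb A'$ as follows. Given an input structure $\db$ over a domain of size $n$ and a linear order $\leq$, the algorithm $\mathbb A'$ stores in the auxiliary structure (i) a verbatim copy of $\db$ and (ii) the auxiliary structure $\mathbb A(\db,\leq)$; since copying is in $\AC^0$ and $\mathbb A$ is in $\AC^d$, we get $\mathbb A'\in\AC^d$. The program $\prog'$ keeps (i) and (ii) unchanged by identity update rules, and after every change recomputes its relation $Q$ from scratch as follows. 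Writing $\db_i$ for the current input after the $i$-th change (so $\db_0=\db$), $\prog'$ first defines, in first-order logic, the cumulative change $\Delta_i$ by $R^{+}\df R^{\db_i}\setminus R^{\db}$ and $R^{-}\df R^{\db}\setminus R^{\db_i}$ for each input relation symbol $R$; this is first-order definable because $\db_i$ is the live input and $\db$ is stored, and one checks that $\db+\Delta_i=\db_i$. Then $\prog'$ evaluates $\prog$'s update formula for $Q$ on the changed input structure $\db+\Delta_i$, the old auxiliary structure $\mathbb A(\db,\leq)$, and the change $\Delta_i$ — which it can do, since $\db+\Delta_i=\db_i$ is the live input, $\mathbb A(\db,\leq)$ is stored, and $\Delta_i$ was just first-order defined — and takes the result as its new relation $Q$. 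This amounts to a single application of $\prog$'s first-order update formulas to data that is stored in or first-order definable over $\prog'$'s auxiliary structure, hence is expressible by a first-order update rule for $\prog'$. In short, $\prog'$'s relation $Q$ after the $i$-th change equals the relation $Q$ of $\prog_{\Delta_i}(\db,\mathbb A(\db,\leq))$.

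For correctness, observe that after at most $\log^d n$ changes of size at most $\log^c n$ the cumulative change $\Delta_i$ has size at most $\log^d n\cdot\log^c n=\log^{c+d}n$: for each relation symbol $R$, the set $R^{\db_i}\,\triangle\,R^{\db}$ is contained in the union of the sets of tuples touched by the $i$ individual changes (a triangle inequality for symmetric difference, summed over the relation symbols). Hence $\Delta_i$ is an admissible single change for $\prog$, and by $(\AC^d,1)$-maintainability the relation $Q$ of $\prog_{\Delta_i}(\db,\mathbb A(\db,\leq))$ equals $\query(\db+\Delta_i)=\query(\db_i)$. (When $\Delta_i$ is empty we have $\db_i=\db$ and the relation $Q$ of $\mathbb A(\db,\leq)$ already equals $\query(\db)$ by the length-$0$ case of maintainability, so $\prog'$ may just copy it.) Thus $\prog'$ and $\mathbb A'$ witness that $\query$ is $(\AC^d,\log^d n)$-maintainable under changes of size $\log^c n$, and since $\query$ is \NL-computable and almost domain independent, Theorem~\ref{theorem:fewerChanges} yields $\query\in\DynFOar$ under changes of size $\log^c n$.

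The construction is short; the one point that needs care — and that I would spell out — is that the replay feeds $\prog$ the \emph{initial} structure $\db$ and the \emph{single} cumulative change $\Delta_i$, not the current structure and the most recent change. This is sound only because $\db+\Delta_i$ coincides with the current input $\db_i$, so $\prog'$ really has everything $\prog$'s update formula needs; and it is exactly this move that makes $(\AC^d,1)$-maintainability suffice, rather than forcing us to simulate $\prog$ for $\log^d n$ consecutive steps. The remaining items are routine: the composition of change sizes used above, the availability of the arithmetic relations $\leq,+,\times$ to $\prog'$, and the harmlessness of enlarging the auxiliary schema to hold the stored copy of $\db$ and the auxiliary relations of $\prog$.
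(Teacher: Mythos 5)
Your proposal is correct and follows essentially the same route as the paper: reduce to Theorem~\ref{theorem:fewerChanges} by showing $(\AC^d,\log^d n)$-maintainability under changes of size $\log^c n$, where the new program accumulates the at most $\log^{c+d} n$ changed tuples and, at every step, applies the one-shot program once to the initial auxiliary data and the cumulative change. Your variant of storing a copy of the initial structure and recovering the cumulative change as a symmetric difference is just an implementation detail of the paper's ``store the accumulated changes'' step.
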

\begin{proof}
Let $\query$ and $d$ be as in the theorem statement, and let $\mathbb A$ be an $\AC^d$ algorithm and $\prog$ a dynamic program that witness that $\query$ is $(\AC^d,1)$-maintainable under changes of size $\log^{c+d} n$.
By Theorem~\ref{theorem:fewerChanges} it suffices to show that there is an $\AC^d$ algorithm $\mathbb A'$ and a dynamic program $\prog'$ that witness that $\query$ is $(\AC^d,\log^d n)$-maintainable under changes of size $\log^{c} n$.

We choose $\mathbb A'$ as $\mathbb A$. The program $\prog'$ just stores the at most $\log^{c+d} n$ changes that accumulate during the $\log^d n$ steps, and in each step uses $\prog$ to answer $\query$, using the initial auxiliary relations computed by $\mathbb A$. 
\end{proof}

\section{Handling Polylog Changes with \DynFO}
\label{section:dynfo}
\newcommand{\Vaff}{\ensuremath{V_{\text{aff}}}}

So far we do not know how to maintain directed reachability under polylogarithmically many changes in \DynFOar. 
In this section we show that reachability can be maintained in $\DynFOar$ under insertions of polylogarithmically many edges for arbitrary graphs (disallowing any deletions), and under insertions \emph{and} deletions of polylogarithmic size for undirected graphs. 

The general idea is similar in both cases. After changing polylogarithmically many edges with an effect on nodes $\Vaff$, the dynamic program (1) computes a structure of polylogarithmic size on $\Vaff$, (2) uses Theorem \ref{theorem:nl-fo} to compute helpful information for this structure, and (3) updates the auxiliary relations by combining this information with the previous auxiliary data. Both (1) and (3) are performed by first-order formulas, and (2) uses an $\NL$-computation.

\begin{theorem}\label{theorem:bulk:mso:reach}
Reachability is in $\DynFOar$ under insertions of size $\log^c n$, for every $c \in \N$.
\end{theorem}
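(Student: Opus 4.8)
By Theorem~\ref{theorem:oneshotmuddling} it suffices to show that reachability is $(\AC^d, 1)$-maintainable under changes of size $\log^{c+d} n$ for some $d$; that is, I only need to handle a \emph{single} bulk insertion of $\log^{c'} n$ edges (writing $c' = c+d$), starting from auxiliary data that an $\AC^d$ algorithm precomputes on the initial (empty) input together with a linear order. Since the input before the change is empty, the precomputation has no real graph to work with, so effectively the whole task is: given a set $\Delta E$ of at most $\log^{c'} n$ new edges on a node set of size $n$, compute, by first-order formulas with arithmetic, the reachability relation of the graph $(V, \Delta E)$, and store it (plus whatever auxiliary relations are needed to present it as a $\DynFO$ state). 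But $\Delta E$ touches at most $2\log^{c'} n$ nodes, call this set $\Vaff$, and every node outside $\Vaff$ is isolated; so reachability in $(V,\Delta E)$ is determined by reachability in the induced subgraph on $\Vaff$, which has polylogarithmically many nodes. By Theorem~\ref{theorem:nl-fo}, applied with the unary relation picking out $\Vaff$, there is an $\FOar$ formula computing reachability in this small induced subgraph, hence an $\FOar$ formula computing the full reachability relation: $(u,v)$ are connected iff $u=v$, or $u,v\in\Vaff$ and the small-graph formula accepts $(u,v)$.

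**Steps, in order.** First, invoke Theorem~\ref{theorem:oneshotmuddling}: reachability is \NL-computable and almost domain-independent (adding isolated nodes does not change reachability among the old nodes), so it is enough to establish $(\AC^d,1)$-maintainability under changes of size $\log^{c+d} n$. Second, describe the single-step program: when the change $\Delta E$ arrives, extract $\Vaff$ as the active domain of $\Delta E$ by a quantifier-free formula; this is a definable unary relation of size $\le \log^{c+d} n$. Third, apply Theorem~\ref{theorem:nl-fo} with the query being reachability and the distinguished set being $\Vaff$: there is an $\FOar$ formula $\psi(x,y)$ over schema $\{E,D\}$ (here $E$ instantiated by $\Delta E$, $D$ by $\Vaff$) such that $\psi(a,b)$ holds iff $b$ is reachable from $a$ in the subgraph induced on $\Vaff$. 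Fourth, set the update rule for the answer relation $\ans$ to $\varphi(x,y) \df x = y \lor \psi(x,y)$, and argue correctness: since $\Delta E \subseteq \Vaff \times \Vaff$, any path in $(V,\Delta E)$ lies entirely inside $\Vaff$ (or is the trivial one-vertex path), so $\psi$ captures exactly the nontrivial reachability pairs. No nontrivial precomputation is needed, so the $\AC^d$ algorithm $\mathbb A$ can be taken trivial (it need only set up the arithmetic relations, which are built in); and if the formalism requires additional auxiliary relations, these too are either trivial on the empty structure or recomputable by $\psi$-style formulas after the change.

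**Main obstacle.** The conceptual content is entirely packaged into the two cited tools, so the delicate point is checking that the hypotheses of Theorem~\ref{theorem:oneshotmuddling} and Theorem~\ref{theorem:nl-fo} are genuinely met by this setup. The one subtlety worth spelling out is the bookkeeping of the polylogarithmic exponents: the input size $N$ fed to Theorem~\ref{theorem:nl-fo} is $|\Vaff| \le 2\log^{c+d}n = \log^{O(1)} n$, which is fine since that theorem is stated for sets of size at most $\log^{c''} n$ for an arbitrary constant $c''$, and the $d$ coming out of Theorem~\ref{theorem:oneshotmuddling}'s $(\AC^d,1)$-condition can be taken to be whatever constant makes the bookkeeping consistent (indeed here $d$ can be absorbed, since the precomputation is trivial). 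A second, more technical, point is that $\DynFO$ requires the answer to be maintained along \emph{every} change sequence, not just a single change; but this is exactly what Theorem~\ref{theorem:oneshotmuddling} is designed to bridge, so once its hypothesis is verified there is nothing further to do. I do not expect any step to require a genuinely new idea — the insertions-only, disallow-deletions restriction is precisely what makes $\Vaff$ closed under paths, and that closure is the whole ballgame.
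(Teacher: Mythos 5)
There is a genuine gap, and it sits at the very first step: you have misread the hypothesis that Theorem~\ref{theorem:oneshotmuddling} asks you to verify. $(\AC^d,1)$-maintainability is defined with a universal quantification over input structures: ``for \emph{each} input structure $\db$ over a domain of size $n$, each linear order, and each change sequence $\alpha$ of length $\leq 1$, \ldots where $\state = (\inp, \mathbb A(\inp,\leq))$.'' The structure you start from is an \emph{arbitrary} graph $G$, with auxiliary data precomputed by an $\AC^d$ algorithm on that $G$; it is not the empty graph. (The empty start is a feature of the \DynFO run itself, which is exactly what the Muddling Lemma bridges; if the hypothesis only concerned the empty structure, the lemma could not possibly yield maintenance over arbitrary change sequences, after which the graph has $t\cdot\log^c n$ edges for unbounded $t$.) Your reduction to ``compute reachability of $(V,\Delta E)$, a graph with polylogarithmically many edges'' therefore trivializes away the entire content of the theorem: the real difficulty is combining a polylog-size insertion with reachability information about an arbitrary, dense, pre-existing graph. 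Your closing remark that the insertion-only restriction ``makes $\Vaff$ closed under paths'' is also a misdiagnosis — in your framing the restriction plays no role, whereas its actual role is that old reachability facts are never destroyed.

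The paper's proof (which does not use the Muddling Lemma at all) shows what is missing: it keeps $\ans$, the transitive closure of the current graph, and upon inserting $E^+$ builds an auxiliary graph $H$ on $\Vaff$ whose edges are $E^+$ \emph{together with shortcut edges} $(u,v)$ for affected pairs already connected in $G$ (read off from $\ans$); Theorem~\ref{theorem:nl-fo} gives $\RelName{TC}_H$ in \FOar, and every path in the new graph decomposes as $\rho_1\rho_2\rho_3$ with $\rho_1,\rho_3$ old paths and $\rho_2$ between affected nodes, so the update is $\ans(s,t) \vee \exists x_1 \exists x_2 (\ans(s,x_1) \wedge \RelName{TC}_H(x_1,x_2) \wedge \ans(x_2,t))$. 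Your route via Theorem~\ref{theorem:oneshotmuddling} could be repaired — let $\mathbb A$ compute the transitive closure of the arbitrary initial $G$ (reachability is in $\NL \subseteq \AC^1$), and then handle the single bulk insertion exactly as above — but that shortcut-graph construction and path-decomposition argument is the heart of the proof, and it does not appear anywhere in your write-up.
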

\begin{proof}
Let $c \in \N$ be fixed. We construct a dynamic program with a single auxiliary relation $\ans$ which stores the transitive closure of the current graph.

Whenever a set $E^+$ of edges is inserted into the current graph $G = (V,E)$, the dynamic program updates $\ans$ with the help of the transitive closure relation of a graph $H$ defined as follows. The nodes of $H$ are the nodes $\Vaff$ affected by the change, that is, the nodes incident to edges in $E^+$. The edge set $E_H$ of $H$ contains the newly inserted edges $E^+$, and additionally edges $(u,v)$ for all pairs $(u,v)$ of nodes from $\Vaff$ that are connected by a path in $G$. Observe that $H$ is of size $O(\log^c n)$ and first-order definable from $G, E^+$ and $\ans$. Hence, by Theorem \ref{theorem:nl-fo}, the transitive closure of $H$ can be defined by a first-order formula. 

The transitive closure relation of $G' \df (V, E \cup E^+)$ can now be constructed from the transitive closures of $G$ and $H$. To this end observe that the transitive closure of $H$ equals the transitive closure relation of $G'$ restricted to $\Vaff$: it accounts for all paths from a node $u \in \Vaff$ to another node $v \in \Vaff$ that may use both newly inserted edges and edges that are already present in $G$.  For this reason, every path $\rho$ in $G'$ consists of three consecutive subpaths $\rho_1 \rho_2 \rho_3 = \rho$, where $\rho_1$ and $\rho_3$ are defined as the maximal subpaths of $\rho$ that do not rely on edges from $E^+$. These subpaths already exist in $G$ and are represented in~$\ans$. The subgraph $\rho_2$ by definition starts and ends at nodes from $\Vaff$, so its existence is given by the transitive closure relation of $H$.

Hence, the transitive closure of $G'$ can be defined by the formula $\varphi(s,t) \df \ans(s,t) \vee \exists x_1 \exists x_2 \: \big(\ans(s,x_1) \wedge \RelName{TC}_H(x_1,x_2) \wedge \ans(x_2,t)\big)$. 
\end{proof}

\begin{theorem}\label{theorem:bulk:mso:ureach}
  Reachability on undirected graphs can be maintained in $\DynFOar$ under changes of size $\log^c n$, for every $c \in \N$.
\end{theorem}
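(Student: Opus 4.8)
The plan is to invoke Theorem~\ref{theorem:oneshotmuddling}: since reachability is $\NL$-computable and almost domain-independent, it suffices to exhibit an $\ACo$ algorithm precomputing auxiliary data for the current undirected graph $G$, together with first-order update formulas that correctly process one polylogarithmic-size change. I would take the auxiliary data to be the reachability relation $\ans$ of $G$, a (canonical) spanning forest $F$ of $G$, and the betweenness relation $P$ of $F$, where $P(u,v,w)$ holds iff $w$ lies on the unique $F$-path between $u$ and $v$; all three are computable in $\ACo$ (connectivity and a canonical spanning forest are $\NL$-computable, and betweenness of a fixed forest is a reachability query). The conceptual point behind this choice is that, although a spanning forest is presumably not first-order maintainable under bulk changes, Theorem~\ref{theorem:oneshotmuddling} allows us to recompute it from scratch before every change.

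Consider one change inserting $E^+$ and deleting $E^-$, of polylogarithmic total size, let $E' \df (E \cup E^+) \setminus E^-$ be the edge set of the resulting graph $G'$, and let $\Vaff$ consist of the endpoints of the changed edges. Call a node \emph{affected} if its $G$-component meets $\Vaff$; affectedness is thus a property of $G$-components and is first-order definable from $\ans$. For an affected node $u$, I would define its \emph{block} $S(u)$ as the connected component of $u$ in the forest $F \setminus E^-$; since an edge $(a,b)\in E^-$ lies on the $F$-path between $u$ and $v$ exactly when $(a,b)\in F$, $P(u,v,a)$ and $P(u,v,b)$, the relation $v\in S(u)$ is first-order definable from $F,P,E^-$. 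The blocks inside an affected $G$-component $C$ are precisely the connected components of the spanning tree of $C$ in $F$ after removing its at most $|E^-|$ deleted tree edges, so there are only polylogarithmically many blocks in total; let $r(u) \df \min S(u)$ be the canonical representative of the block of $u$.

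I would then form the polylogarithmic-size graph $\hat H$ whose vertices are the representatives of affected nodes and with an edge $\{r(u),r(v)\}$ whenever a \emph{non-tree} edge of $G'$, i.e.\ an edge of $E'\setminus F$, joins $S(u)$ to $S(v)$; $\hat H$ is first-order definable from the stored data and from $E^+,E^-$. Applying Theorem~\ref{theorem:nl-fo} with its relations $E$ and $D$ replaced by these first-order definitions of the edge set and vertex set of $\hat H$ shows that the reflexive-transitive closure $\hat H^\ast$ is first-order definable. The new reachability relation would be
\[
  \ans'(u,v) \;\df\; \big(\neg\mathrm{aff}(u)\wedge\neg\mathrm{aff}(v)\wedge\ans(u,v)\big)
  \;\vee\; \exists x\,\exists y\,\big(x=r(u)\wedge y=r(v)\wedge\hat H^\ast(x,y)\big),
\]
where $x=r(u)$ already forces $u$ to be affected. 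For correctness, write $W$ for the set of affected nodes; no edge of $G'$ leaves $W$ (edges of $E$ stay within $G$-components, and edges of $E^+$ have both endpoints in $\Vaff\subseteq W$), so $W$ is a union of $G'$-components. Hence for $u,v\in W$ we have $u\sim_{G'}v$ iff $u\sim_{G'[W]}v$, and $G'[W]$ is obtained from the block partition of $W$ (induced by the surviving tree edges $F\setminus E^-$) by merging two blocks exactly when a non-tree edge of $G'$ connects them — which is precisely $\hat H$ — so $u\sim_{G'}v$ iff $\hat H^\ast(r(u),r(v))$. For $u,v$ both outside $W$ their $G$-components are untouched, so $u\sim_{G'}v$ iff $\ans(u,v)$; and if exactly one of $u,v$ lies in $W$, the two nodes are in different $G'$-components and both disjuncts fail (one node is affected, and the other has no representative). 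By Theorem~\ref{theorem:oneshotmuddling} this yields the theorem, taking $d=1$ for each fixed~$c$.

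I expect the main difficulty to be conceptual rather than computational: recognizing that, in contrast to insertions, the reachability relation alone cannot cope with deletions — a surviving path between affected nodes may weave arbitrarily through the affected set — and that the correct remedy is to keep a spanning forest (together with betweenness), which reduces the question to removing only the at most $|E^-|$ affected \emph{tree} edges and thus leaves only polylogarithmically many pieces, which Theorem~\ref{theorem:nl-fo} can handle; and, in tandem, realizing that this forest cannot itself be maintained, so that the strengthened Muddling Lemma of Theorem~\ref{theorem:oneshotmuddling} — which recomputes auxiliary data in $\AC$ before each change — is exactly what makes the construction go through. The remaining work, namely formalizing the block analysis, writing down the first-order definition of $\hat H$, and checking the case distinction above, I expect to be routine.
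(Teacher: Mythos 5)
Your proof is correct, but it takes a genuinely different route from the paper's. The paper proves this theorem by exhibiting an honest dynamic program in the style of Dong and Su: it keeps, as auxiliary relations, the transitive closure, a directed spanning forest $S$ and its transitive closure $\RelName{TC}_S$, and shows that all of these can be \emph{maintained} across arbitrary change sequences --- insertions are handled by contracting connected components into a polylogarithmic graph $H$, deletions by reconnecting the polylogarithmically many forest pieces via surviving non-tree edges, in both cases computing a spanning forest of $H$ via Theorem~\ref{theorem:nl-fo} and reusing Dong--Su's re-rooting/edge-reversal formulas to repair $S$ and $\RelName{TC}_S$. You instead never maintain the forest: you invoke the strengthened Muddling Lemma (Theorem~\ref{theorem:oneshotmuddling}), recompute reachability, a canonical spanning forest and its betweenness relation from scratch in $\AC^1$ before each change, and only have to produce the reachability relation after a single bulk change, which you do via the block/quotient graph $\hat H$ and Theorem~\ref{theorem:nl-fo}. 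The shared core is the contraction to a polylogarithmic-size graph plus Theorem~\ref{theorem:nl-fo}; the difference is the architecture. Your version is simpler --- insertions and deletions are handled uniformly in one shot, and all of the forest-repair bookkeeping disappears --- at the price of leaning on the muddling machinery (and hence on the $\AC$ precomputation), which the paper only deploys for the $\DynFOParityar$ result; the paper's proof, in exchange, yields a self-contained dynamic program that keeps a spanning forest as reusable auxiliary data throughout the run. Two small points to tidy up: update formulas are evaluated on the \emph{changed} input structure and the old auxiliary relations, so to define $E^+$, $E^-$ and $\Vaff$ you should also store (or let $\mathbb A$ output) a copy of the old edge relation --- the framework explicitly allows this; and your block and case analysis is exactly the kind of routine verification you anticipate, with no hidden obstruction.
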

\begin{proof}
The dynamic program from \cite{DongS98} that maintains undirected reachability in \DynFO under single-edge changes uses, in addition to the transitive closure relation of the input graph, two binary auxiliary relations that represent a directed spanning forest of the input graph and its transitive closure, respectively.
We show that these relations can still be maintained in \DynFOar under changes of $\log^c n$ many edges, for fixed $c \in \N$.

Recall that it suffices to treat insertions and deletions independently, as they can be handled subsequently by a dynamic program.

For edge insertions, the construction idea is very similar to the proof of Theorem \ref{theorem:bulk:mso:reach}. We define a graph $H$, where nodes correspond to connected components of the input graph that include an affected node, and edges indicate that some inserted edge connects the respective connected components. As this graph is of polylogarithmic size, thanks to Theorem~\ref{theorem:nl-fo} we can express a spanning forest for $H$ and its transitive closure in \FOar, which is sufficient to update the respective relations for the whole input graph.

In the case of edge deletions, the update formulas need to replace deleted spanning tree edges, whenever this is possible.
Our approach is very similar to the case of edge insertions. The spanning tree decomposes into polylogarithmically many connected components when edges are deleted. These components can be merged again if non-tree edges exist that connect them, and these edges become tree edges of the spanning forest.
For a correspondingly defined graph of polylogarithmic size we can again define a spanning forest and its transitive closure, and from this information select the new tree edges.

We explain both cases in more detail. Let $G = (V,E)$ be the undirected input graph of size $n$ with transitive closure $\ans$, and let $S$ and $\RelName{TC}_S$ be a directed spanning forest for $G$ and its transitive closure, respectively.
Suppose that a set $E^+$ of size at most $\log^c n$ is inserted. 
We define a graph $H$ as follows. It contains a node $v \in V$ if (1) $v$ is affected, that is, if $E^+$ contains an edge of $v$, and (2) $v$ is the smallest affected node in its connected component of $G$ with respect to $\leq$. 
It contains an edge $(u,v)$ if $(u',v') \in E^+$ for some nodes $u', v'$ with $(u,u') \in \ans$ and $(v,v') \in \ans$, so, if the connected components of $u$ and $v$ are connected by an inserted edge.
The graph $H$ is easily seen to be $\FO$-definable using $\ans$.
Because $H$ is of polylogarithmic size with respect to $n$ and a spanning forest of a graph can be computed\footnote{For example the breadth-first spanning forest with the minimal nodes of each component, with respect to $\leq$, as roots can be computed with the inductive counting technique due to Immerman and Szelepcs{\'{e}}nyi \cite{Immerman88, Szelep1987}.} in \NL, we can define a spanning tree $S_H$ as well as its transitive closure $\RelName{TC}_{S_H}$ in \FOar, thanks to Theorem~\ref{theorem:nl-fo}.

The update formulas define updated auxiliary relations for the graph $G' = (V, E \cup E^+)$ as follows.
Intuitively, an edge $(u,v) \in S_H$ means that the connected components of $u$ and $v$ in $G$ shall be connected in $G'$ directly by a new tree edge. There might be several edges in $E^+$ that may serve this purpose, and we need to choose one of them. So, an edge $(u',v') \in E^+$ becomes part the updated spanning forest if there is an edge $(u,v) \in S_H$ such that $u'$ and $u$ as well as $v'$ and $v$ are in the same connected component of $G$, respectively, and $(u',v')$ is the lexicographically minimal edge with these properties. This is clearly \FOar-expressible using the old auxiliary relations. 
The old tree edges from $S$ are taken over to the updated version, although some directions need to be inverted, if for a newly chosen tree edge $(u',v')$ the node $v'$ was not the root of the directed spanning tree of its connected component. First-order formulas that determine which edges need to be reversed and that provide the adjusted transitive closure for the components of the spanning forest are given in \cite{DongS98}. The relation $\RelName{TC}_S$ is updated by combining this information with $\RelName{TC}_{S_H}$.

We note that $\ans$ is first-order expressible from $\RelName{TC}_S$.
In conclusion, all auxiliary relations can be updated in \FOar.

Now suppose that a set $E^-$ of at most $\log^c n$ edges is deleted.
Let $S'$ be the spanning forest that results from $S$ after all tree edges from $E^-$ are removed, and let $\RelName{TC}_{S'}$ be its transitive closure, which is easily \FO-expressible from $\RelName{TC}_S$.
Similarly as above we define a graph $H$, with nodes being the minimal affected nodes in a weakly connected component of $S'$, which are connected by an edge if the respective weakly connected components of $S'$ are connected by some edge from $E \setminus E^-$.
The same way as above, \FOar formulas can define a spanning forest and its transitive closure for $H$ and then use this information to define a spanning forest and its transitive closure for the changed graph $G' = (V, E \setminus E^-)$.
\end{proof}
 
\section{Handling Polylog Changes with \texorpdfstring{\DynFOParity}{\DynClass{FO+Mod2}}}
\label{section:isolation}
\def\shrinkage{2.1mu}
\def\vecsign{\mathchar"017E}
\def\dvecsign{\smash{\stackon[-1.95pt]{\mkern-\shrinkage\vecsign}{\rotatebox{180}{$\mkern-\shrinkage\vecsign$}}}}
\def\dvec#1{\def\useanchorwidth{T}\stackon[-4.2pt]{#1}{\,\dvecsign}}
\newcommand{\bidirect}[1]{\dvec{#1}}
\newcommand{\shift}[1]{\ensuremath{#1^{\uparrow}}}

While we have seen, in the last section, that reachability for directed graphs can be maintained under edge insertions of polylogarithmic size, a matching result for edge deletions is still missing.
Two intermediate results were shown in \cite{DattaMVZ18}, building on the work of \cite{Hesse03Reach}: reachability can be maintained in \DynFOar under insertions and deletions that affect $\frac{\log n}{\log \log n}$ nodes, and in \DynFOMaj under insertions and deletions of polylogarithmically many edges.

In this section, we adapt the proof of the latter result, also using ideas that appear in~\cite{DattaHK14}, and show that reachability can be maintained in \DynFOParityar under edge insertions and deletions of polylogarithmic size for classes of directed graphs for which non-vanishing weight assignments can be computed in \AC, that is, by polynomial-size circuits of polylogarithmic depth.
This is possible for example for planar graphs \cite{TewariV12}, as well as for graphs with bounded treewidth, as we show towards the end of this section.

We start by giving the necessary definitions regarding isolating and non-vanishing weight assignments. 

\subsection{Isolating and non-vanishing weights}
\label{section:isolation:prelims}
A \emph{weighted} directed graph $(G, w)$ consists of a graph $G = (V,E)$ and a \emph{weight assignment} $w \colon E \to \Z$ that assigns an integer weight $w(e)$ to each edge $e \in E$. 
The weight assignment $w$ is \emph{bounded} by a function $f(|V|)$ if $w$ assigns only weights from the interval $[-f(|V|),f(|V|)]$.

The weighted graph $(G, w)$ is \emph{min-unique} if (1) $w$ only gives positive weights to the edges $E$, and (2) if some path from $s$ to $t$ exists, for some pair $s,t$ of nodes, then there is a unique path from $s$ to $t$ with minimum weight under $w$.
Here, the weight of a path (and in general every sequence of edges) is the sum of the weights of its edges.
If $(G, w)$ is min-unique, we say that $w$ \emph{isolates} (minimal paths in) $G$.

Define $\bidirect{G} = (V, \bidirect{E})$ to be the \emph{bidirected extension} of $G$, where $\bidirect{E} \df \{(u,v), (v,u) \mid (u,v) \in E\}$.
A weight assignment $w$ is \emph{skew-symmetric} if $w(u,v) = -w(v,u)$ for all $(u,v) \in \bidirect{E}$. It has \emph{non-zero circulation} if the weight of every simple directed cycle in $\bidirect{G}$ is non-zero (here, a cycle is \emph{simple} if no node occurs twice).

From polynomially bounded non-zero circulation weights for $\bidirect{G}$ we can easily compute isolating weights for $G$.

\begin{restatable}{lemma}{nonzeroIsolating}
\label{lemma:nonzero-isolating}
Let $G = (V,E)$ be a graph with $n$ nodes, and let $w$ be skew-symmetric non-zero circulation weight assignment for $\bidirect{G}$, which is bounded by $n^k$ for some $k \in \N$.
Then $w'$ with $w'(e) = w(e) + n^{k+2}$ for every $e \in E$ isolates $G$. 
\end{restatable}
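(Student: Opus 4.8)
The plan is to prove that $(G,w')$ is min-unique. The additive shift by $n^{k+2}$ does two things, and I would isolate both at the outset. First, whenever $E\ne\emptyset$ we have $n\ge 2$, so $|w(e)|\le n^k<n^{k+2}$ and hence $w'$ is strictly positive on $E$; this is one half of min-uniqueness, and it also guarantees that every minimum-$w'$-weight walk between two nodes is a simple path (a repeated vertex yields a positive-weight cycle that can be cut out). Second, a simple path has at most $n-1$ edges, so $|w(p)|\le (n-1)n^k$ for every $s$-$t$ path $p$, and therefore for $s$-$t$ paths $p,q$ with $|q|>|p|$ one gets $w'(q)-w'(p)\ge n^{k+2}-2(n-1)n^k=n^k((n-1)^2+1)>0$. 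Consequently every minimum-$w'$-weight $s$-$t$ path has exactly $d_G(s,t)$ edges, where $d_G$ denotes unweighted distance, i.e.\ it is a shortest path in the unweighted sense, and on the set of such paths $w'$ differs from $w$ only by the constant $d_G(s,t)\cdot n^{k+2}$. So the statement reduces to: for every connected pair $s,t$ there is a \emph{unique} shortest $s$-$t$ path of minimum $w$-weight.

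For this reduced claim I would run a cut-and-paste (exchange) argument. Suppose, for contradiction, $p_1\ne p_2$ were both shortest $s$-$t$ paths of minimum $w$-weight among shortest paths. Let $u$ be the last vertex on their common prefix, and let $z$ be the first vertex encountered when traversing $p_1$ from $u$ towards $t$ that also lies on the portion of $p_2$ beyond $u$ (such $z$ exists, since $t$ qualifies). The sub-paths $q_1:=p_1[u,z]$ and $q_2:=p_2[u,z]$ are sub-paths of shortest $s$-$t$ paths, hence shortest $u$-$z$ paths, so $|q_1|=|q_2|=d_G(u,z)$; they leave $u$ by different edges, so they are distinct and $d_G(u,z)\ge 2$. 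By the choice of $z$ the interior vertices of $q_1$ do not meet $p_2$ at all, so $C:=q_1\cdot\overline{q_2}$ (traverse $q_1$ forward, then $q_2$ backward) is a genuine \emph{simple} directed cycle of length $2d_G(u,z)\ge 4$ in $\bidirect{G}$. By skew-symmetry $w(C)=w(q_1)-w(q_2)$, and the non-zero-circulation hypothesis forces $w(C)\ne 0$; say $w(q_1)<w(q_2)$. Then $p_3:=p_2[s,u]\cdot q_1\cdot p_2[z,t]$ is again a simple $s$-$t$ path (its three pieces meet only at $u$ and $z$, using that the interior of $q_1$ avoids $p_2$ and that $u$ precedes $z$ on $p_2$), it has the same number of edges as $p_2$ because $|q_1|=|q_2|$, and $w(p_3)=w(p_2)-(w(q_2)-w(q_1))<w(p_2)$, contradicting the minimality of $w(p_2)$. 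This proves uniqueness, and together with the reduction step it proves the lemma.

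The step I expect to be the real obstacle is resisting the natural but broken shortcut: viewing the difference of $p_1$ and $p_2$ (with $p_2$'s edges reversed) as a circulation in $\bidirect{G}$, decomposing it into simple directed cycles whose $w$-weights sum to $0$, and hoping non-zero circulation yields a contradiction. It does not — a sum of several nonzero integers can vanish, and one can in fact build non-zero-circulation weightings admitting two distinct minimum-$w$ paths of equal length between two nodes; what rescues the lemma is that a third, recombined path is then strictly lighter, which is exactly what the single-segment exchange above extracts. The remaining care is bookkeeping: checking that $C$ is truly simple and of length $\ge 3$ (so that the non-zero-circulation property is applicable) and that $p_3$ is simple with the same length as $p_2$; the length equality is precisely where the $n^{k+2}$ shift earns its keep, since it confines attention to shortest paths and hence makes $q_1$ and $q_2$ equally long.
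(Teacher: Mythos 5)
Your proposal is correct in substance and, at its core, runs the same cut-and-paste argument as the paper's proof: both cut the two hypothetical minimal paths at the last common vertex $u$ before they diverge and at the first reconvergence vertex, both exploit that the shift $n^{k+2}$ dominates any difference of $w$-weights of simple paths (which is below $2n^{k+1}$), and both feed the resulting simple cycle in the bidirected extension of $G$ into the non-zero circulation hypothesis. The arrangement differs: you first factor the shift out by showing that $w'$-minimal paths are unweighted shortest paths and then obtain the contradiction as a strictly lighter path of the same length, whereas the paper stays with $w'$, forces the two subpaths to have equal $w'$- and hence equal $w$-weight, and contradicts non-zero circulation with a zero-weight cycle. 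These are two phrasings of the same argument rather than genuinely different routes.

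One step needs a small repair: ``say $w(q_1)<w(q_2)$'' is not a without-loss-of-generality move. Your choice of $z$ (the first vertex of $p_1$ after $u$ lying on $p_2$ beyond $u$) guarantees that the interior of $q_1$ avoids $p_2$, but not that the interior of $q_2$ avoids $p_1$; so in the case $w(q_2)<w(q_1)$ the spliced object $p_1[s,u]\cdot q_2\cdot p_1[z,t]$ is a priori only a walk, and the simplicity argument you give for $p_3$ does not transfer. The gap closes in one line: that walk has exactly $d_G(s,t)$ edges, and every $s$-$t$ walk with exactly $d_G(s,t)$ edges is automatically a simple path (otherwise excising the segment between two occurrences of a repeated vertex would yield an $s$-$t$ walk, hence a path, with fewer than $d_G(s,t)$ edges); alternatively one can fall back on positivity of $w'$ and global $w'$-minimality, as the paper implicitly does in its own exchange step. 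Also, in your closing aside, ``two distinct minimum-$w$ paths of equal length'' should read ``two distinct equal-$w$-weight paths of equal length'': as your own exchange argument shows, two such paths can never both be weight-minimal. With these minor adjustments the argument is complete.
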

\begin{proof}
All weights in $w'$ are clearly positive. It remains to show that $w'$ isolates minimal paths in $G$.
Assume, towards a contradiction, that there are two different $s$-$t$-paths $\rho_1, \rho_2$ with the same minimal weight under $w'$ in $G$, for some nodes $s$ and $t$. Without loss of generality, they are both simple paths, as otherwise they cannot be minimal.
Let $u$ be the last node visited by both paths before they differ for the first time, and let $v$ be the first node after $u$ that is visited by both paths.
Let $\rho_1^{uv}, \rho_2^{uv}$ be the subpaths in $\rho_1, \rho_2$ from $u$ to $v$, respectively.
If these subpaths have different weights, say, $w'(\rho_1^{uv}) < w'(\rho_2^{uv})$, then we can replace $\rho_2^{uv}$ by $\rho_1^{uv}$ in $\rho_2$ and get a lighter path, contradicting the assumption that both $\rho_1$ and $\rho_2$ are paths with minimal weight.
So, $w'(\rho_1^{uv}) = w'(\rho_2^{uv})$ needs to hold. Then also $w(\rho_1^{uv}) = w(\rho_2^{uv})$ holds, because $w(\rho)$ and $w'(\rho)$ differ by a multiple of $n^{k+2}$ for any path $\rho$, and the difference between $w(\rho)$ and $w(\rho')$ is at most $n^{k+1}$, for simple paths $\rho$ and $\rho'$.
So, $w'$ cannot compensate weight differences under $w$.
But then the concatenation of $\rho_1^{uv}$ and the reverse of $\rho_2^{uv}$ is a simple cycle in $\bidirect{G}$ with weight $w(\rho_1^{uv}) - w(\rho_2^{uv}) = 0$, contradiction the assumption that $w$ has non-zero circulation. 
\end{proof}

We explain how (families of) polynomially bounded weight assignments for graphs are represented in relational structures. 
Let $V$ be the node set of a weighted graph of size $n$. We identify $V$ with the set $\{0, \ldots, n-1\}$ of numbers according to the given linear order $\leq$. A tuple $(a_1, \ldots, a_k)$ of nodes then represents the number $\sum_{i=1}^k a_i n^{i-1}$.
A (partial) function $f \colon V^k \to V^\ell$ is represented as a $(k+\ell)$-ary relation $F$ over $V$, such that for each $\tpl a \in V^k$ there is at most one $\tpl b \in V^\ell$ with $(\tpl a,\tpl b) \in F$.
We say that $f$ is \FOar-definable if $F$ is defined by an $\FOar$ formula $\psi(\tpl x, \tpl y)$, where $\tpl x = x_1, \ldots, x_k$ and $\tpl y = y_1, \ldots, y_\ell$. 
An $\FOar$ formula $\psi(\tpl z, \tpl x, \tpl y)$, with $\tpl z = z_1, \ldots, z_m$, defines a family $\{ f(\tpl c) \colon V^k \to V^\ell \mid \tpl c \in V^m \}$ of functions.
 
\subsection{Maintaining Reachability in weighted graphs}
\label{section:isolation:reach}
We now state and prove the main result of this section.

\begin{theorem}\label{theorem:reach-mod2}
Let $\calG$ be a class of graphs for which polynomially bounded skew-symmetric non-zero circulation weights can be computed in $\AC$.
Then, reachability for graphs in $\calG$ is in \DynFOParityar under changes of size $\log^c n$, for every $c \in \N$.
\end{theorem}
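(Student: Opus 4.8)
The plan is to reduce, via the analogue for \DynFOParityar of the one-shot Muddling Lemma (Theorem~\ref{theorem:oneshotmuddling}; its proof, and that of the underlying Theorem~\ref{theorem:fewerChanges}, is indifferent to which logic is used for the update rules, so \DynFOar may be replaced by \DynFOParityar throughout), to the following task: exhibit an \AC-computable precomputation and an \FOParityar update program that correctly maintain reachability under a \emph{single} change of polylogarithmic size (we restrict attention to change sequences that stay inside $\calG$, so that the precomputation may assume that its input lies in $\calG$). On input $G = (V,E) \in \calG$ the precomputation computes: \textbf{(i)}~polynomially bounded skew-symmetric non-zero circulation weights $w$ for $\bidirect{G}$, possible in \AC by hypothesis; \textbf{(ii)}~the isolating weight assignment $w'$ for $G$ obtained from $w$ via Lemma~\ref{lemma:nonzero-isolating}, together with the observation that the \emph{same} $w'$ isolates \emph{every} subgraph of $G$, since the exchange argument in the proof of that lemma, applied at the first point where two competing minimum-weight paths diverge, produces a simple cycle of $\bidirect{G}$ no matter which edges of $G$ survive; \textbf{(iii)}~a representation of the walk information of $(G, w')$ supporting updates of small support, following \cite{Hesse03Reach} and \cite{DattaMVZ18} -- in essence a suitable truncation of the matrix of walk generating functions, whose $(u,v)$-entry records, for each weight $\ell$ up to a polynomial bound, the number modulo $2$ of $u$-$v$ walks of $w'$-weight $\ell$; this is computable in \AC (indeed in \NC) by iterated matrix multiplication over $\mathbb{F}_2$. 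By the strengthened Lemma~\ref{lemma:nonzero-isolating}, every subgraph of $G$ has between any two nodes at most one minimum-weight walk, so the parity of that walk count equals the reachability bit; this is the feature, in the spirit of \cite{DattaHK14}, that lets us recover reachability using \emph{parity} quantifiers rather than counting.

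Now fix a single change of polylogarithmic size; as usual insertions and deletions are processed separately, and $\Vaff$ denotes the polylogarithmically many affected nodes. Insertions are handled exactly as in Theorem~\ref{theorem:bulk:mso:reach} and need no parity quantifiers: form the polylogarithmic-size graph on $\Vaff$ consisting of the inserted edges together with the $G$-reachability shortcuts between affected nodes, compute its transitive closure by Theorem~\ref{theorem:nl-fo}, and combine it with the precomputed reachability of $G$. For the deletion of an edge set $E^-$ the combination step is where the difficulty lies. One first checks that reachability in $G \setminus E^-$ reduces to reachability in a polylogarithmic-size summary graph $H$ on vertex set $W \df \Vaff \cup \{s,t\}$, with an edge $a \to b$ precisely when $a$ reaches $b$ in $G$ after deleting the vertices $W \setminus \{a,b\}$ and the edges $E^-$: cutting a simple $s$--$t$ path of $G \setminus E^-$ at its visits to $W$ produces segments that use no edge of $E^-$ and whose interiors avoid $W$, and conversely concatenating such segments yields a walk from $s$ to $t$ avoiding $E^-$. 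Since $H$ has polylogarithmic size, its reachability relation is \FOar-definable by Theorem~\ref{theorem:nl-fo}, so what remains is to define the edge relation of $H$, that is, to answer from the precomputed data queries of the form ``does $a$ reach $b$ in $G$ after removing a polylogarithmically bounded set of vertices and edges?''

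Such a query is answered by a small-support update (of Schur-complement or Sherman--Morrison type) of the walk-generating-function representation: removing those vertices and edges changes the weighted adjacency matrix only inside the block indexed by $W$, so the new walk generating functions are the old ones plus a correction term built entirely from the polylogarithmic-size ``interface'' submatrices indexed by $W$ and from the inverse of one such interface matrix. The desired reachability bit is then the parity of the minimum-weight walk count in the resulting subgraph of $G$, which -- by the isolation property established in the precomputation -- is well defined and computed correctly over $\mathbb{F}_2$; the parity quantifiers, and with them the hypothesis of the theorem, enter exactly here. The main obstacle, and the most delicate part of the construction, is to carry out this interface computation -- extracting the submatrices, inverting an interface matrix, and recombining -- \emph{inside} \FOParityar and not merely inside \NC: one must keep all non-trivial linear algebra confined to structures of polylogarithmic size, so that the $\oplus$-analogue of Theorem~\ref{theorem:nl-fo} applies (proved exactly like Theorem~\ref{theorem:nl-fo}, but performing the iterated matrix multiplication over $\mathbb{F}_2$, which yields a uniform \ACzmtwo circuit and hence, by \cite{BarringtonIS90}, an \FOParityar formula), and one must restrict the remaining arithmetic on the polynomially large precomputed data to a bounded number of $\mathbb{F}_2$-convolutions; this bookkeeping mirrors that of the \DynFOMaj argument of \cite{DattaMVZ18}, with modulo-$2$ arithmetic replacing the \TCz arithmetic used there. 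Finally, the hypothesis holds for concrete classes -- for planar graphs by \cite{TewariV12}, and for bounded-treewidth graphs as shown later in this section -- so the theorem is not vacuous.
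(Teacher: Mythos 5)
Your skeleton coincides with the paper's: reduce via the one-shot Muddling Lemma (Theorem~\ref{theorem:oneshotmuddling}, read for \FOParityar updates) to a single change of polylogarithmic size, precompute in \AC the non-zero circulation weights, the isolating weights of Lemma~\ref{lemma:nonzero-isolating}, and a truncated mod-$2$ inverse of $I - A_{(G,w)}(x)$, and update that inverse by Sherman--Morrison--Woodbury in \FOParityar (the paper's Lemma~\ref{lemma:inverseapproximation_update}). Your observation that the precomputed weights isolate \emph{every} subgraph of $G$ is correct (every simple cycle of $\bidirect{G^-}$ is a simple cycle of $\bidirect{G}$ with the same weight), and it is exactly how the paper handles deletions; your summary graph on $\Vaff \cup \{s,t\}$ with vertex-removal queries is an unnecessary detour, since the SMW update for the edge deletions alone already yields the truncated inverse, hence all-pairs reachability, of $G^- = (V, E\setminus E^-)$, but it is not wrong.

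The genuine gap is in how a single change containing \emph{both} insertions and deletions is handled. The one-shot reduction hands you one mixed change; ``processing insertions and deletions separately'' must here mean an explicit sequential composition of definable relations, and the order is not interchangeable. Your insertion step is purely relational and needs only the transitive closure of the pre-insertion graph, so it can be layered on top of the deletion step: first define $\Reach(G\setminus E^-)$ from the matrix machinery, then apply the argument of Theorem~\ref{theorem:bulk:mso:reach} with \emph{that} relation as the shortcut relation. But as written you combine the inserted edges with ``the precomputed reachability of $G$'', which is incorrect when $E^-\neq\emptyset$ (it routes shortcuts through deleted edges); and in the opposite order -- the one you present, and the paper's default convention of processing insertions first -- the construction breaks down entirely, because after inserting $E^+$ you have neither isolating weights for the new edges nor any matrix representation of $G\cup E^+$, so the SMW/parity step cannot absorb the deletions. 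Assigning isolating weights to freshly inserted edges is precisely the difficulty on which the paper spends Lemmas~\ref{lemma:update-weights} and~\ref{lemma:compute-isolating} (an \FOar-definable family of candidate weightings, one of which isolates, with the inverse updated for all members in parallel), since it pushes insertions through the matrix representation. If you fix the order and use $\Reach(G\setminus E^-)$ in the insertion step, your argument goes through and in fact bypasses that machinery -- a genuine simplification relative to the paper -- but in the proposal this composition is left unspecified and, where it is specified, stated with the wrong reachability relation, which is exactly the point where the paper's proof does its real work.
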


Although this result leaves open whether reachability can be maintained in \DynFOar under polylogarithmically many edge changes, note that, in light of Corollary~\ref{cor:barrier}, it gives a tight upper bound for the size of changes that can be handled in \DynFOParityar.

We outline the proof strategy, which closely follows the strategy from \cite{DattaMVZ18}.
Suppose, we are given a weighted directed graph $(G, w)$ where $G = (V,E)$ is a graph with $n$ nodes and $w$ is an isolating weight assignment. We represent this weighted graph by an $n \times n$ matrix $A_{(G, w)}(x)$ as follows: if $(u,v) \in E$, then the $u$-$v$-entry of $A_{(G, w)}(x)$ is $x^{w(u,v)}$, where $x$ is a formal variable, otherwise the $u$-$v$-entry is $0$. 

The matrix $D \df \sum_{i=0}^\infty (A_{G_w}(x))^i$ is a matrix of formal power series in the formal variable~$x$, and from an $s$-$t$-entry $\sum_{i=0}^\infty c_i x^i$ of this matrix we can read the number $c_i$ of paths from $s$ to $t$ with weight $i$.
Our goal is to determine the coefficients $c_i$ modulo $2$, for all $i$ up to some polynomial bound. From this information we can deduce whether there is a path from $s$ to $t$ in $G$: as $w$ isolates minimal paths in $G$, if there is some path from $s$ to $t$, then there is a unique path with minimal weight, which means that for the weight $\ell$ of this path we have $c_\ell \equiv 1 \pmod 2$. 
Otherwise, if no path from $s$ to $t$ exists, $c_\ell \equiv 0 \pmod 2$ for all $i$.

We use the following insights to actually compute and update the coefficients $c_i$. Notice that the matrix $D$ is invertible over the ring of formal power series (see \cite{DattaMVZ18} and its full version \cite{DattaMVZ18a}) and can be written as 
$D = (I - A_{(G, w)}(x))^{-1}$, where $I$ is the identity matrix. 

So, we need to compute and update the inverse of a matrix. This cannot be done effectively for matrices of inherently infinite formal power series. For this reason we compute $D$ only approximately. A \emph{$b$-approximation} $C$ of $D$, for some $b \in \N$, is a matrix of formal polynomials that agrees with the entries of $D$ on the low-degree coefficients $c_i$ for all $i \leq b$. This precision is preserved by the matrix operations we use, see \cite[Proposition~14]{DattaMVZ18}. Note that it is sufficient to maintain an approximation of $D$, as for a weighted graph with polynomially bounded weights the maximal possible weight $w_\text{max}$ of a minimal path is bounded by a polynomial, and thus only the coefficients $c_i$ with $i \leq w_\text{max}$ are relevant.

To update the matrix inverse, we employ the Sherman-Morrison-Woodbury identity (cf.~\cite{HendersonS81}).
This identity states that when updating a matrix $A$ to a matrix $A + \Delta A$, with $\Delta A$ writeable as matrix product $UBV$, the inverse of $A$ can be updated as follows:  
\[(A + \Delta A)^{-1} = (A + UBV)^{-1} = A^{-1}-A^{-1}U(I+BVA^{-1}U)^{-1}BVA^{-1}.\]

When $\Delta A$ has only $k$ non-zero rows and columns, there is a decomposition $UBV$ where $B$ is a $k \times k$ matrix.

The right-hand side can be computed in $\FOParityar$ for $k \df \log^c n$. To see this, we observe that also $I+BVA^{-1}U$ is a $k \times k$ matrix. Computing the right-hand side now requires multiplication and  iterated addition of polynomials over $\Z$ as well as the computation of the inverse of a $k \times k$ matrix. As all computations are done modulo $2$, this is indeed possible in \FOParityar for (matrices of) polynomials with polynomial degree using results of \cite{HealyV06}. We provide more details later.

As we work with isolating weight assignments, our update routines also need to assign weights to changed edges such that the resulting weight assignment is again isolating. We show that this can be done if we start with (slightly adjusted) non-zero circulation weights. Using Theorem~\ref{theorem:oneshotmuddling} we can assume that such an assignment is given, and that we only need to update the weights once.

\begin{proofsketchof}{Theorem~\ref{theorem:reach-mod2}}

Let $c$ be arbitrary. Thanks to Theorem~\ref{theorem:oneshotmuddling} it suffices to show that there is a $d \in \N$ such that reachability  is $(\AC^d,1)$-maintainable by a dynamic program $\prog$ under changes of size $\log^{c+d} n$. Let $d' \in \N$ be such that polynomially bounded skew-symmetric non-zero circulation weights for graphs from $\calG$ can be computed in $\AC^{d'}$, and set $d\df \max(2, d')$. 

Let $G = (V,E)$ be a graph with $n$ nodes. Let $u$ be skew-symmetric non-zero circulation weights for $G$ and let $n^k$ be the polynomial bound on the weights. Further, let $w$ be the weight assignment that gives weight $n^{k+2} + u(e)$ to each edge $e \in E$. Notice that $w$ is polynomially bounded by $n^{k+3}$ and isolates $G$ according to Lemma~\ref{lemma:nonzero-isolating}.

The $\AC^d$ initialization computes, as auxiliary information, the weightings $u$ and $w$ and an $n^{b}$-approximation~$C$ of $(I - A_{(G, w)}(x))^{-1} \bmod 2$, that is, a matrix of formal polynomials in $x$ that agree with the formal power series in $(I - A_{(G, w)}(x))^{-1} \bmod 2$ on the coefficients up to degree $n^{b}$. Here, $b \in \N$ is a constant to be determined later.

When changing $G$ via a change $\Delta E$ with deletions $E^-$ and insertions $E^+$, the dynamic program $\prog$ handles deletions and insertions subsequently:
\begin{enumerate}[(1)]
 \item Handling of deletions:
    \begin{enumerate}[(a)]
      \item Define isolating weights $w^-$ for $G^- \df (V, E \setminus E^-)$. The weights $w^-$ will differ from $w$ only for the at most $\log^{c+d} n$ edges in $E^-$.
      \item Compute an $n^b$-approximation of \mbox{$(I - A_{(G^-, w^-)}(x))^{-1} \bmod 2$} using the existing $n^b$-approximation  of $(I - A_{(G, w)}(x))^{-1} \bmod 2$.
    \end{enumerate}
  \item Handling of insertions:
    \begin{enumerate}[(a)]
     \item Define a family $W^{-/+}$ of weightings such that one member of the family is isolating for $G^{-/+} \df (V, (E \setminus E^-) \cup E^+)$. All weightings of the family will differ from $w^-$ only for the at most $\log^{c+d} n$  edges in $E^+$.
     \item Compute an $n^b$-approximation of \mbox{$(I - A_{(G^{-/+}, w^{-/+})}(x))^{-1} \bmod 2$} using the existing $n^b$-approximation of $(I - A_{(G^-, w^-)}(x))^{-1} \bmod 2$ for all members $w^{-/+}$ of $W^{-/+}$.
    \end{enumerate}
\end{enumerate}

We first explain Steps (1a) and (2a) in more detail. For computing the isolating weights $w^-$, the program proceeds as follows. Skew-symmetric non-zero circulation weights $u^-$ for $G^-$ are obtained from the non-zero circulation weights $u$ for $G$ by setting the weight of deleted edges $e \in E^-$ to $0$. As $u^-$ gives the same weight to all simple cycles in $G^-$ as $u$ gives to these cycles in $G$, it has non-zero circulation. Now, the weight assignment $w^-$ defined by $n^{k+2} + u^-(e)$ is isolating for $G^-$ due to Lemma~\ref{lemma:nonzero-isolating}, and differs from $w$ only for edges in $E^-$.

Computing the isolation weights for insertions is more challenging. In Lemma~\ref{lemma:update-weights} below we show that from $G^-$, its transitive closure, and a set $E^+$ of edges of polylogarithmic size one can \FOar-define a family $W^{-/+}$ of weight assignments such that one of these assignments is isolating for $G^{-/+}$. %

For both Steps (1b) and (2b), the inverse of a matrix of polynomials over $\Z_2$ of polynomial degree needs to be updated after changing polylogarithmically many entries (i.e.\ entries corresponding to $E^-$ and $E^+$, respectively). Inverses can be updated under such changes in \FOParityar due to Lemma~\ref{lemma:inverseapproximation_update} (see below) and the observation that changes $\Delta A$ of size $\log^{c+d} n$  to such a matrix can be decomposed into $UBV$ as required by Lemma \ref{lemma:inverseapproximation_update}, see Lemma 7 in \cite{DattaMVZ18}. For Step (2b) this is done in parallel for all members of $W^{-/+}$.

For checking whether there is a path from $s$ to $t$ after the change $\Delta E$ to $G$, the dynamic program checks whether there is a member of $W^{-/+}$ such that the $s$-$t$-entry of $(I - A_{(G^{-/+}, w^{-/+})}(x))^{-1} \bmod 2$ is non-zero. Since one member of $W^{-/+}$ is isolating, a path will be discovered this way.
\end{proofsketchof}

In the remainder of this subsection we show how inverses for matrices of polynomials can be updated under changes of polylogarithmic size, and how weights for inserted edges can be found.

The following lemma is obtained using the same techniques as in \cite{DattaMVZ18}. Here, $\Z_2[[x]]$ denotes the ring of formal power series with coefficients from $\Z_2$, and $\Z_2[x]$ denotes its subring that consists of all finite polynomials.

\begin{restatable}{lemma}{lemmaInverseapproximationUpdate}
\label{lemma:inverseapproximation_update}
    Suppose $A \in \Z_2[[x]]^{n \times n}$ is invertible over $\Z_2[[x]]$, and $C \in \Z_2[x]^{n \times n}$ is an $m$-approximation of $A^{-1}$. If $A + \Delta A$ is invertible over $\Z_2[[x]]$ and $\Delta A$ can be written as $UBV$ with $U \in \Z_2[x]^{n \times k}, B \in \Z_2[x]^{k \times k},$ and $V \in \Z_2[x]^{k \times n}$, then
      \[(A + \Delta A)^{-1} \approx_m C-CU(I+BVCU)^{-1}BVC\]
    Furthermore, if $k \leq \log^c n$ for some fixed $c$ and all involved polynomials have polynomial degree in $n$, then the right-hand side can be defined in \FOParityar from $C$ and $\Delta A$.
\end{restatable}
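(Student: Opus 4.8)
### Proof proposal for Lemma~\ref{lemma:inverseapproximation_update}

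\textbf{Approach.}
The plan is to first verify the algebraic identity over the ring $\Z_2[[x]]$, and then argue that when we truncate to an $m$-approximation the error introduced by each matrix operation stays confined to degrees above $m$, so the displayed $\approx_m$ holds. Finally, we unwind the formula into an explicit sequence of arithmetic operations on polynomial matrices and check that each of them — iterated addition, multiplication, and the inversion of a $k\times k$ matrix with $k\le\log^c n$ — is expressible in $\FOParityar$ over $\Z_2$ using the results of \cite{HealyV06} together with Theorem~\ref{theorem:nl-fo}.

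\textbf{Key steps.}
First I would record the Sherman--Morrison--Woodbury identity already stated in the main text, now read over $\Z_2[[x]]^{n\times n}$: since $A$ and $A+\Delta A = A + UBV$ are both invertible, the matrix $I + BVA^{-1}U \in \Z_2[[x]]^{k\times k}$ is invertible, and
\[
(A+\Delta A)^{-1} \;=\; A^{-1} - A^{-1}U\,(I+BVA^{-1}U)^{-1}\,BV A^{-1}.
\]
This is a formal identity — one checks directly that the right-hand side, multiplied on the left by $A+UBV$, telescopes to $I$ — so no hypothesis beyond invertibility of the two full matrices is needed; invertibility of $I+BVA^{-1}U$ follows because its inverse can be written $I - BV(A+UBV)^{-1}U$, which exists under our assumptions. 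Second, I would replace $A^{-1}$ by its $m$-approximation $C$ everywhere on the right-hand side and track the error. The crucial observation, exactly as in \cite[Proposition~14]{DattaMVZ18}, is that if $C \approx_m A^{-1}$ then $C - A^{-1}$ has all coefficients of degree $\le m$ equal to $0$, and since addition and multiplication of power series cannot lower the degree of a ``purely high-degree'' perturbation, every intermediate quantity computed from $C$ (namely $CU$, $BVCU$, $I+BVCU$, its inverse, and the final product) agrees with the corresponding quantity computed from $A^{-1}$ on all coefficients of degree $\le m$. In particular $(I+BVCU)^{-1} \approx_m (I+BVA^{-1}U)^{-1}$: here one uses that inverting a matrix over $\Z_2[[x]]$ whose constant term is invertible (it is, being $I$ plus something with zero constant term, or at worst something still unit over $\Z_2$) is itself a degree-non-decreasing operation on the perturbation, because the inverse is given by a Neumann-type series. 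Assembling these gives the claimed $\approx_m$.

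\textbf{Definability in $\FOParityar$.}
For the last sentence, I would observe that the right-hand side is built from a constant number of matrix operations. The products $CU$, $BVC$, and $BVCU$ involve multiplying an $n\times n$ (or $n\times k$, $k\times n$) matrix of polynomials of polynomial degree; each entry is an iterated sum of at most $n$ products of two polynomials over $\Z_2$, which is computable in $\TCz$ and hence in $\FOParityar$ by \cite{HealyV06} (parity suffices since all arithmetic is mod~$2$; the built-in $\leq,+,\times$ handle the degree bookkeeping). The only nontrivial ingredient is $(I+BVCU)^{-1}$: this is the inverse of a $k\times k$ matrix over $\Z_2[x]$ with $k\le\log^c n$ and entries of polynomial degree, which — after clearing denominators and reducing to solving a linear system of polylogarithmic dimension over $\Z_2$ — is an $\NL$-type computation on a structure of polylogarithmic size (the matrix has only $k^2 = \log^{2c} n$ entries), so Theorem~\ref{theorem:nl-fo} applies, or alternatively one invokes directly the $\FOParityar$-definability of iterated matrix product / inverse over $\Z_2$ for polylog-size matrices from \cite{HealyV06}. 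Composing these constantly many definable operations yields a single $\FOParityar$ formula defining the right-hand side from $C$ and $\Delta A$ (the decomposition $\Delta A = UBV$ itself being $\FO$-definable, cf.~Lemma~7 of \cite{DattaMVZ18}).

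\textbf{Main obstacle.}
The step I expect to be most delicate is the error analysis: making precise that truncating $A^{-1}$ to $C$ and then passing through a matrix \emph{inversion} (not just $+$ and $\times$) preserves the $m$-approximation. One has to argue that inverting the $k\times k$ matrix $I+BVCU$ commutes, up to degree $m$, with the substitution $C \leftrightarrow A^{-1}$; the clean way is to note $(I+N)^{-1} = \sum_{j\ge 0}(-N)^j$ as a formal power series in the matrix entries (valid because the constant term of $I+BVCU$ is the identity, so $N \df BVCU - $ (its constant part) has strictly positive valuation, or more carefully one works over $\Z_2[[x]]$ where $N$ has positive $x$-valuation), and then each term is a product handled by the degree-non-decreasing property. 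Getting the valuation/invertibility bookkeeping exactly right — in particular confirming the constant term of $I+BVCU$ is a unit so the Neumann series converges $x$-adically — is the one place where care is genuinely required; everything else is routine bookkeeping that mirrors \cite{DattaMVZ18} verbatim.
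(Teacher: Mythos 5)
Your handling of the algebraic identity and of the error analysis is essentially the paper's: correctness of the $\approx_m$ statement is Proposition~14 of \cite{DattaMVZ18} transplanted from $\Z[[x]]$ to $\Z_2[[x]]$, and your valuation argument (a perturbation supported on degrees $>m$ cannot leak into degrees $\leq m$ through addition, multiplication, or inversion, the latter via $(D+E)^{-1}=(I+D^{-1}E)^{-1}D^{-1}$ with $D^{-1}E$ of valuation $>m$) is the right way to see it. Two small slips there: the constant term of $I+BVCU$ is not the identity in general, only invertible over $\Z_2$ (because it coincides with that of $I+BVA^{-1}U$), so the Neumann series must be applied after factoring out that constant part; and ``computable in \TCz and hence in \FOParityar'' is backwards, since $\TCz \not\subseteq \FOParityar$ --- the correct justification for the sums and products of polynomials modulo~$2$ is \cite{HealyV06} directly.

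The genuine gap is in the definability of $(I+BVCU)^{-1}$, which is the crux of the second claim. Your primary route --- clearing denominators and ``reducing to solving a linear system of polylogarithmic dimension over $\Z_2$, so Theorem~\ref{theorem:nl-fo} applies'' --- does not work: although the matrix has only $k^2\leq\log^{2c}n$ entries, each entry is a polynomial of degree polynomial in $n$, so the object carries polynomially many bits and is not a structure of polylogarithmic size; expanding it coefficient-wise yields a $\Z_2$-linear system of \emph{polynomial} dimension, whose solution is complete for $\oplus$L and is not known to be expressible in \FOParityar (if it were, much of the paper's machinery would be unnecessary). Moreover Theorem~\ref{theorem:nl-fo} speaks of \NL-computable queries on polylog-size substructures, and $\Z_2$-linear algebra is not such a query. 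Your fallback, invoking ``directly'' an \FOParityar-definable matrix inverse for polylog-size matrices from \cite{HealyV06}, attributes to that paper a result it does not contain: it provides constant-depth parity circuits for polynomial arithmetic (iterated sums of polynomially many polynomials and products of polylogarithmically many polynomials modulo~$2$), not matrix inversion with polynomial-degree entries. The paper closes exactly this gap via the cofactor formula: the $i$-$j$-entry of the inverse is $(-1)^{i+j}\det D_{ji}/\det D$, and by Lemma~15 of \cite{DattaMVZ18} the determinant of a $\log^c n\times\log^c n$ matrix of poly-degree polynomials can be assembled from precisely those two primitives, which modulo~$2$ are supplied by \cite{HealyV06}. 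You would need to add this determinant-based argument (or a worked-out substitute) for the definability claim to stand.
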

\begin{proofsketch}
   The correctness of the equation can be proved exactly as in Proposition 14 in \cite{DattaMVZ18} (there, this is proved for $\Z[[x]]$ instead of $\Z_2[[x]]$). 
   
   We argue that the right-hand side can be defined in $\FOParityar$. The involved matrix additions and multiplications modulo $2$ can easily be expressed in \FOParityar, see \cite{HealyV06}. It remains to explain how the inverse of the $\log^c n \times \log^c n$ matrix $I+BVCU$ can be found.

    To this end, recall that the $i$-$j$-entry of the inverse of a matrix $D$ is equal to $(-1)^{i+j} \frac{\det D_{ji}}{\det D}$, where $D_{ji}$ is obtained from $D$ by removing the $j$-th row and the $i$-th column.
    
    So, it is sufficient to show that the determinant of a $\log^c n \times \log^c n$ matrix of polynomials with polynomial degree can be expressed modulo $2$.  In \cite[Lemma~15]{DattaMVZ18} it was shown that such a determinant can be expressed in \FOMajar, by observing that one only needs to be able to express the sum of polynomially many polynomials and the product of $\log^c n$ many polynomials. This observation is still valid for computing the determinant modulo $2$ in \FOParityar. Both kind of computations are possible modulo $2$ in \FOParityar as well \cite{HealyV06}.
\end{proofsketch}

\begin{restatable}{lemma}{lemmaUpdateWeights}
\label{lemma:update-weights}
Let $G = (V,E)$ be a graph and let $n = |V|$. Further, let $w$ be a polynomially bounded isolating weight assignment for $G$, and let $E^+$ be a set of $\bigO(\log^c n)$ edges that is disjoint from $E$, for some $c \in \N$.
Then there is a family $W'$ of polynomially many polynomially bounded weight assignments such that
\begin{enumerate}
 \item $W'$ is \FOar-definable from $G$, $\Reach(G)$, $E^+$ and $w$,
 \item all $w' \in W'$ agree with $w$ on $E$,
 \item at least one $w' \in W'$ is isolating for $(V, E \cup E^+)$.
\end{enumerate}
\end{restatable}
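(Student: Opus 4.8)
The plan is to extend the isolating weight assignment $w$ on $E$ to the new edges in $E^+$ by choosing, for each $e \in E^+$, a weight of the form $w(e) \df M \cdot \beta(e) + \gamma(e)$, where $M$ is a large polynomial factor (larger than any path weight under $w$), $\gamma(e)$ is a ``small'' offset in a bounded range, and $\beta$ is a coefficient that separates edges by the number of times a minimal path can use them. Concretely, since $|E^+| = \bigO(\log^c n)$, a minimal simple path in $G^{-/+} \df (V, E \cup E^+)$ uses at most $\log^c n$ edges from $E^+$; assigning the $j$-th edge of $E^+$ (in the linear order) weight roughly $N^{j}$ for a suitably large polynomial $N$ guarantees that two distinct multisets of $E^+$-edges give different ``$E^+$-contributions'' to the path weight, so no cancellation between the old part and the new part can occur. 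This reduces the isolation problem on $G^{-/+}$ to isolating, for each fixed choice of which $E^+$-edges are used and in which multiplicity, the residual subpaths that lie entirely in $G$ — and those are already uniquely minimised by $w$.

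The subtlety is that several distinct paths in $G^{-/+}$ may use exactly the same multiset of $E^+$-edges but stitch them together with different $G$-subpaths between the same pair of endpoints, and while $w$ isolates each such $G$-subpath individually, it need not isolate the concatenation, because two $G$-subpaths with different endpoint pairs can have equal weight. This is exactly the point where a single deterministic weighting fails and we must pass to a \emph{family} $W'$. The standard fix — and the one used in \cite{DattaMVZ18} — is to add to each new edge $e = (u,v) \in E^+$ a further perturbation that depends on the pair $(u,v)$ via a ``universe'' of candidate coefficient vectors: for a tuple $\tpl c$ ranging over $V^m$ we set $\gamma_{\tpl c}(u,v) \df \langle \tpl c, \text{(a fixed polynomial encoding of $(u,v)$)} \rangle$, so that the total weight contributed by the $E^+$-edges of a path becomes a linear form in $\tpl c$ evaluated at the vector of endpoint-encodings. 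By an isolation/Schwartz–Zippel style counting argument over the polynomially sized range of $\tpl c$, for every pair $s,t$ there is at least one $\tpl c$ for which all the finitely many competing ``$E^+$-patterns'' of $s$-$t$-paths receive distinct weights; combined with the fact that $w$ already isolates the residual $G$-subpaths, that $\tpl c$ yields an isolating assignment $w'_{\tpl c}$ for $(V, E \cup E^+)$. One then takes $W' \df \{ w'_{\tpl c} \mid \tpl c \in V^m \}$, noting that a single universal $\tpl c$ need not work for all $(s,t)$ simultaneously — which is fine, since the dynamic program in Theorem~\ref{theorem:reach-mod2} tries all members of $W'$ in parallel.

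It remains to verify the three bookkeeping conditions. Property~(2) holds by construction, since we only assign weights to edges of $E^+$ and leave $w$ untouched on $E$. Property~(1), \FOar-definability of the family from $G$, $\Reach(G)$, $E^+$ and $w$: the offsets $\gamma_{\tpl c}$ are given by explicit arithmetic on the $\leq,+,\times$ relations and the (polylogarithmically many, hence first-order addressable) edges of $E^+$, the factor $M$ and the base $N$ are fixed polynomials in $n$ computable in \FOar, and $\Reach(G)$ is only needed to certify that endpoints of $E^+$-edges lie in $G$; the overall weight of each $e \in E^+$ under $w'_{\tpl c}$ is thus a first-order function of $\tpl c$ and the input. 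Property~(3), that some member is isolating, is the content of the counting argument sketched above; here the main obstacle — and the step I expect to require the most care — is bounding the number of distinct ``$E^+$-patterns'' that can arise for a fixed $s,t$ so that the union bound over patterns stays below the size of the candidate space $V^m$, and choosing $m$ and the polynomial bounds ($M$, $N$, and the bound on $\gamma_{\tpl c}$) consistently so that all weights remain polynomially bounded while no unintended cancellations occur between the $M$-scaled term, the $N$-scaled term, and the $\gamma$-term. Once these magnitudes are fixed, isolation follows by the same case analysis as in Lemma~\ref{lemma:nonzero-isolating}: a hypothetical pair of distinct minimal $s$-$t$-paths of equal weight must agree on their $E^+$-pattern (by the $N$-separation), hence differ only in $G$-subpaths, and then equal total weight forces equal weight on a mismatched $G$-subpath pair, contradicting that $w$ isolates $G$.
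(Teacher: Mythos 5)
There is a genuine gap, and it sits exactly at the quantitative heart of the lemma. Your first ingredient --- giving the $j$-th edge of $E^+$ a weight of roughly $N^{j}$ so that distinct multisets of $E^+$-edges cannot collide --- produces weights up to $N^{|E^+|} = N^{\Theta(\log^c n)}$, which is superpolynomial in $n$; this violates the requirement that the members of $W'$ be polynomially bounded (and would also wreck the $n^b$-approximation degree used in Theorem~\ref{theorem:reach-mod2}). Your second ingredient, the family of perturbations $\gamma_{\tpl c}$ with a Schwartz--Zippel/union-bound argument over $\tpl c \in V^m$, does not close this hole: the number of competing ``$E^+$-patterns'' for a single pair $s,t$ can be as large as $2^{\Theta(\log^c n)} = n^{\Theta(\log^{c-1} n)}$, so the union bound over pattern collisions cannot stay below the polynomial candidate space $|V|^m$ for any constant $m$. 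You flag this yourself as ``the step requiring the most care,'' but it is precisely the step where a new idea is needed, not bookkeeping. The paper's proof supplies that idea: it builds an \emph{adorned} auxiliary graph $H$ on the $\bigO(\log^c n)$ affected nodes, with real edges $E^+$ and fictitious edges given by $\Reach(G)$ (this is the actual role of $\Reach(G)$, not merely certifying endpoints), and then invokes Lemma~\ref{lemma:compute-isolating}, a deterministic, inductive FKS-prime-hashing construction in the style of \cite{KallampallyT16}: isolation is established level by level for paths of length $2^i$, so at each of the $\log N$ levels only the at most $N^3$ currently-minimal elements must be separated, and an $\bigO(\log N)$-bit prime suffices. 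Because $N$ is polylogarithmic in $n$, the resulting weights are quasi-polynomial in $N$ but polynomial in $n$, and the family $W'$ is simply indexed by the $\bigO(\log\log n)$ candidate primes of $\bigO(\log\log n)$ bits each --- polynomially many choices, encodable by a tuple over $V$ and hence \FOar-definable.

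A second, smaller deviation: you explicitly settle for per-pair isolation (``a single universal $\tpl c$ need not work for all $(s,t)$ simultaneously'') and argue this suffices because the dynamic program tries all members in parallel. That is not what item~3 of the lemma asserts --- it demands one member of $W'$ that is isolating for the whole graph $(V, E \cup E^+)$ --- and the paper achieves this via the \emph{strong} real-isolation property of Lemma~\ref{lemma:compute-isolating} (minimal elements of different pairs get distinct weights), which is also what makes the subsequent path-exchange argument go through (equal $E^+$-contribution forces the two paths to use the same $E^+$-edges in the same order, after which the isolation of $w$ on $G$ yields the contradiction). So even if a per-pair guarantee could be argued sufficient for the application, your proposal as written neither proves the stated lemma nor resolves the counting obstacle on which its correctness hinges.
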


The proof works along the following lines. We use the approach from \cite{KallampallyT16} to obtain weights for the inserted edges with the following idea: if there is an $s$-$t$-path that uses at least one inserted edge from $E^+$, then there is a unique minimal path under all $s$-$t$-paths that use at least one such edge, where we ignore the weight of the paths that is contributed by edges from $E$.
We multiply these constructed weights for the edges from $E^+$ by a large polynomial to ensure that the combined weight assignment with the existing weights for edges in $E$ is isolating for the graph $(V, E \cup E^+)$.

The approach from \cite{KallampallyT16} does not lead to polynomially bounded weights in the size of the graph it is used for.
We construct them for a graph with $N = \bigO(\log^c n)$ many nodes, and although they are not polynomially bounded in $N$, they are in $n$.

We now get to the details of the construction.
In the following, we consider graphs with two sets of edges. An \emph{adorned} graph $G = (V,E,F)$ has, besides the set $E$ of \emph{real} edges, a further set $F$ of \emph{fictitious} edges, which is not necessarily disjoint from $E$.
For each pair $s, t$ of nodes, let $\calP'_{s,t}$ be the set of $s$-$t$-paths in $G$ that use at least one real edge $e \in E$ and arbitrarily many fictitious edges $e' \in F$.
Let $\calP_{s,t}$ be the set of edge sequences that result from $\calP'_{s,t}$ by removing the fictitious edges from the paths.

We say that a weight assignment $w$ \emph{real-isolates} $G$, if (1) it maps each real edge $e \in E$ to a positive integer, and (2) each non-empty $\calP_{s,t}$ has a unique minimal element under $w$.
In the following, we will need a stronger property. We say that $w$ \emph{strongly} real-isolates $G$, if in addition for each pair $\calP_{s,t}$ and $\calP_{s',t'}$ of non-empty sets with $(s,t) \neq (s',t')$ the unique minimal elements of $\calP_{s,t}$ and $\calP_{s',t'}$ have different weights under $w$.

The following lemma can be proved along the lines of \cite{KallampallyT16}. The proof is given at the end of this subsection. 

\begin{lemma}\label{lemma:compute-isolating}
There is a constant $\beta \in \N$ such that for every natural number $N$ and every adorned graph $G = (V,E,F)$ with $V= \{1, \ldots, N\}$ there is a sequence $\tpl p = p_1, p_2, \ldots, p_{\log N}$ of primes, each consisting of at most $(\beta-2) \log N$ bits, such that the weight assignment 
\[w_{\tpl p}(e) = \begin{cases}
  \sum_{j = 1}^{\log N} N^{\beta(\log N - j)}(w_0(e) \bmod p_j) & e \in E \\
  0 & e \in F
\end{cases}\]
strongly real-isolates $G$. Here, $w_0(u,v) \df 2^{{(N+1)}u + v}$.
\end{lemma}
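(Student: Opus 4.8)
The plan is to follow the isolation argument of Kallampally and Tewari~\cite{KallampallyT16} but instantiate it carefully so that the primes and hence the weights stay polynomially bounded in the size $n$ of the \emph{original} graph, even though they need not be polynomial in $N$. First I would fix a candidate weight function $w_0(u,v) \df 2^{(N+1)u + v}$ on the real edges; the point of this exponential base is that distinct (multi)sets of real edges get distinct $w_0$-weights, and moreover two different edge sequences of the same length are separated by a factor of at least $2$, while the total weight of any simple-ish path is bounded by $N \cdot 2^{(N+1)^2}$, so it fits in $\bigO(N^2)$ bits. Then the standard isolation-by-random-prime / hashing lemma (the ``combinatorial'' isolation lemma, cf.~\cite{FKS84}-style hashing used in~\cite{KallampallyT16}, or the version via the Prime Number Theorem) says: for a family of at most $M$ relevant distinct values each of bit-length $\leq L$, there is a single prime $p$ of $\bigO(\log(ML))$ bits such that all $M$ values remain distinct modulo $p$. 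Iterating this $\log N$ times with a Chinese-remainder-style reconstruction gives the mixed-radix combination $w_{\tpl p}(e) = \sum_{j=1}^{\log N} N^{\beta(\log N - j)}(w_0(e)\bmod p_j)$.

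The key steps, in order, are: (1) Bound the number of ``relevant'' objects. For each pair $(s,t)$ and each candidate weight there is at most one lightest element of $\calP_{s,t}$, and we want the lightest elements across all $N^2$ pairs to have pairwise distinct weights; so the family of values to be separated has size polynomial in $N$, and each value (a partial $w_0$-sum over a set of real edges) has bit-length $L = \bigO(N^2)$. (2) Apply the hashing/isolation lemma once to get a prime $p_1$ with $\bigO(\log(N^{\bigO(1)} N^2)) = \bigO(\log N)$ bits that separates the images of these polynomially-many values, hence whose residues already ``partially isolate''; absorb the factor into the constant $\beta$ so each $p_j$ has at most $(\beta-2)\log N$ bits. (3) Show that taking residues modulo $p_1$ and then recursing on the (now smaller-bit-length) residues for $\log N$ rounds, and reassembling via the weights $N^{\beta(\log N - j)}$ as digits in base $\approx N^\beta$, yields a single assignment $w_{\tpl p}$ under which all the relevant lightest elements are simultaneously separated — this is exactly the ``strongly real-isolates'' condition, since separating the lightest elements across \emph{all} pairs $\calP_{s,t}$ gives both a unique minimum in each $\calP_{s,t}$ and distinct minima across distinct pairs. (4) Verify boundedness: each $p_j < 2^{(\beta-2)\log N} = N^{\beta-2}$, so $w_0(e)\bmod p_j < N^{\beta-2}$, and the outer sum has $\log N$ terms with the largest coefficient $N^{\beta(\log N - 1)} = N^{\beta\log N - \beta}$; the total is at most $N^{\beta \log N}$, which is $2^{\bigO(\log^2 N)}$ — super-polynomial in $N$ but, since $N = \bigO(\log^c n)$, it is $2^{\bigO(\log^{2c} n)} \subseteq 2^{\bigO(\log n)} = n^{\bigO(1)}$, i.e.\ polynomially bounded in $n$ as required. (5) Confirm the weight is $0$ on fictitious edges by fiat, so only the real-edge contributions matter to the $\calP_{s,t}$-weights, matching the definition.

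I would present step~(3) — proving that iterating the single-prime separation for $\log N$ rounds with the mixed-radix reassembly genuinely produces one assignment that simultaneously isolates — as the main obstacle, and I expect to handle it essentially verbatim as in~\cite{KallampallyT16}: the crucial invariant is that after hashing modulo $p_j$ the number of distinct residue values among objects that still collide drops, and after $\log N$ rounds (since we started with at most $\mathrm{poly}(N) \leq N^{O(1)}$ objects, so $O(\log N)$ halvings of the collision count suffice) no two relevant objects collide in the full mixed-radix tuple; choosing $\beta$ large enough that $N^\beta$ strictly exceeds every residue ensures the digits do not interfere, so the tuple-value order refines the residue orders and in particular the true lightest element per pair wins. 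The existential ``there is a sequence $\tpl p$'' is then immediate from the one-prime existence statement applied $\log N$ times; no uniformity in $n$ is needed here because Lemma~\ref{lemma:compute-isolating} only asserts existence of $\tpl p$, and the later use of it (in Lemma~\ref{lemma:update-weights}) will quantify over all candidate prime-sequences of the stated bit-length, which is a polynomial-size family and hence $\FOar$-definable. The remaining verifications (boundedness, the $0$ on $F$, translating ``separate all lightest elements'' into ``strongly real-isolates'') are the routine calculations I would not grind through here.
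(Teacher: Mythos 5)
There is a genuine gap in how you identify the set of values the primes must separate, and consequently in your step~(3). You claim ``the family of values to be separated has size polynomial in $N$'' and that separating the lightest elements across the $N^2$ pairs already yields a unique minimum in each $\calP_{s,t}$. But each $\calP_{s,t}$ can contain exponentially many elements with distinct $w_0$-values, and the reduction modulo small primes is not monotone: an element that is much heavier than the true ($w_0$-)minimum can nevertheless tie with it under the compressed weight $w_{\tpl p}$. Separating only the $N^2$ ``true minima'' pairwise therefore does not establish condition~(2) of strong real-isolation (uniqueness of the minimum \emph{within} each $\calP_{s,t}$), and no choice of $\log N$ primes of $\bigO(\log N)$ bits can preserve distinctness of exponentially many $\bigO(N^2)$-bit values, since the product of the primes has only $\bigO(\log^2 N)$ bits. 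Your proposed invariant --- that each round halves the number of collisions among a fixed polynomial-size family, so $\log N$ rounds suffice --- is not the mechanism of \cite{KallampallyT16} and does not close this gap: if the family really were of polynomial size, a single application of Lemma~\ref{lem:FKS} would already suffice, and if it is exponential, the halving claim has no justification.

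The argument the paper (following \cite{KallampallyT16}) actually uses is an induction on path \emph{length}, which is where the number $\log N$ of primes comes from: level $i$ handles the sets $\calP^i_{s,t}$ of real-partial paths with at most $2^i$ real edges. The inductive hypothesis that $w_i$ already strongly real-isolates all $\calP^i_{s,t}$ forces any candidate minimum of $\calP^{i+1}_{s,t}$ under $w_i$ to be a concatenation, at some middle node $v$, of the unique $w_i$-minima of $\calP^i_{s,v}$ and $\calP^i_{v,t}$; hence at most $N$ candidates per pair and at most $N^3$ values overall, and only \emph{these} are fed to Lemma~\ref{lem:FKS} to obtain $p_{i+1}$, while the mixed-radix coefficients $N^{\beta(\log N - j)}$ make the earlier residues dominate so that non-candidates stay strictly heavier. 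Without this length-doubling restriction of the candidate set, the polynomial bound you rely on is unsupported. Your peripheral points (weights of size $2^{\bigO(\log^2 N)}$ being polynomial in $n$, weight $0$ on fictitious edges, the purely existential choice of $\tpl p$) are fine, but the core isolation step needs to be reworked along these lines.
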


Using this lemma, we can prove Lemma~\ref{lemma:update-weights}.
\begin{proofof}{Lemma \ref{lemma:update-weights}}
Let $\Vaff \subseteq V$ be the set of nodes with edges in $E^+$.
We construct an adorned graph $H = (V_H,E_H,F_H)$ with node set $V_H \df \Vaff$ as follows. The set $E_H$ of real edges is $E_H \df E^+$, and the set $F_H$ of fictitious edges is $F_H \df \{ (u,v) \mid u,v \in V_H, (u,v) \in \Reach(G)\}$. 
So, a fictitious edge $(u,v)$ of $H$ represents the existence of a $u$-$v$-path in $G$. 

Let $\beta, \tpl p$ and $w_{\tpl p}$ be as promised to exist for $H$ by Lemma~\ref{lemma:compute-isolating}. Further, let $n^k$ be the upper bound on the weights of $w$.
We define the weight assignment $w'$ for $G' \df (V, E \cup E^+)$ as follows.
 \begin{itemize}
  \item $w'(e) = w(e)$ for all $e \in E$,
  \item $w'(e) = n^{k+2} \cdot w_{\tpl p}(e)$ for all $e \in E^+ = E_H$.
 \end{itemize}

We show that $w'$ isolates $G'$ first, afterwards we show that $w'$ is a member of an \FOar-definable family of weightings.

For showing that $w'$ isolates $G'$ suppose, towards a contradiction, that there are two lightest simple $s$-$t$-paths $\pi, \rho$ in $G'$ with respect to $w'$, for some nodes $s$ and $t$.
Let $\pi_1\pi_2\pi_3 = \pi$ and $\rho_1\rho_2\rho_3 = \rho$ be the subpaths of $\pi$ and $\rho$ such that edges from $E^+$ are only used in $\pi_2$ and $\rho_2$ and those subpaths are minimal with that property. 
Notice that both $\pi_2$ and $\rho_2$ are non-empty, as otherwise $\rho$ and $\pi$ are also lightest paths in $G$ with respect to $w$, contradicting the assumption that $w$ isolates $G$.
Let $\pi'$ and $\rho'$ be the paths in $H$ that correspond to $\pi_2$ and $\rho_2$, where subpaths of $\pi_2$ and $\rho_2$ are replaced by fictitious edges.
We consider two cases.

If $w_{\tpl p}(\pi') \neq w_{\tpl p}(\rho')$, then the total weight of $\pi$ and $\rho$ contributed by the edges from $E^+$ differs by at least $n^{k+2}$. As the total weight contributed by the remaining edges is upper-bounded by $n^{k+1}$, we have that $w'(\pi) \neq w'(\rho)$, the desired contradiction.

Thus assume, without loss of generality, that $w_{\tpl p}(\pi') = w_{\tpl p}(\rho')$. We can assume that both paths $\pi'$ and $\rho'$ are lightest paths in $H$: if, say, $\pi'$ is not a lightest path, then we can replace $\pi_2$ in $\pi$ by a path that uses lighter edges from $E^+$, leading to an overall lighter path by the argument of the previous case. 
Then, because $w_{\tpl p}$ strongly isolates $H$, the edges from $E_H = E^+$ used in $\pi'$ and $\rho'$ must be equal, and the same is true for $\pi_2$ and $\rho_2$. These edges must also be used in the same order, as otherwise a path with fewer edges from $E^+$ exists, which by the argument of previous case is lighter than both $\pi$ and $\rho$.
Because $\pi$ and $\rho$ are different paths, there must be subpaths $\pi^*$ and $\rho^*$ that consist only of edges from $E$ and are both simple $u$-$v$-paths, for some nodes $u$ and $v$. 
As $w$ is isolating for $G$, not both subpaths can be lightest $u$-$v$-paths in $G$. Say, $\rho^*$ is not such a lightest path. If we replace $\rho^*$ in $\rho$ by the lightest $u$-$v$-path in $G$, we obtain a path that is lighter than $\rho$, as $w'$ agrees with $w$ on $E$. 
So, $\rho$ is not a lightest $s$-$t$-path in $G'$ with respect to $w'$, the desired contradiction.
It follows that $w'$ is isolating for $G'$.

The weight assignment $w_{\tpl p}$ is clearly $\FOar$-definable from $G, \Reach(G), E^+, w$ and $\tpl p$, as $H$ is $\FOar$-definable, the involved numbers consist of at most polylogarithmically many bits, and \FOar can express the necessary arithmetic on numbers of that magnitude (see \cite[Theorem~5.1]{HesseAB02}).
The sequence $\tpl p$ consists of $\bigO(\log \log n)$ many primes (as $H$ is of size polylog) which in turn are represented by $\bigO(\log \log n)$ many bits, because $H$ has only polylogarithmic size in $n$. 
So, $\tpl p$ can be represented by a tuple of nodes from $V$, and it follows that a family $W'$ of weight assignments with $w' \in W'$ is $\FOar$-definable from $G, \Reach(G), E^+$ and $w$.
\end{proofof}

It remains to prove Lemma~\ref{lemma:compute-isolating}. As already mentioned above, the proof closely follows \cite{KallampallyT16}. It also uses the following lemma.

\begin{lemma}[\cite{FKS}, see also \cite{PTV}]\label{lem:FKS}
For every constant $c$ there is a constant $c_0$ such that for every set
$S$ of $m$ bit integers with $|S| \leq m^c$ the following holds:
There is a $c_0\log{m}$ bit prime number $p$ such that for any $x \neq y \in S$ it holds that $x \not\equiv y \pmod{p}$.
\end{lemma}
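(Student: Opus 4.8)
The plan is to argue by counting. Call a prime $p$ \emph{bad} if it fails to separate $S$, i.e.\ if $p \mid (x-y)$ for some distinct $x, y \in S$, and \emph{good} otherwise. I would bound the number of bad primes from above by a polynomial in $m$, bound the number of primes of at most $c_0 \log m$ bits from below by a strictly larger polynomial, and conclude that a good prime must lie in the allowed range.

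First I would bound the bad primes. There are at most $\binom{|S|}{2} \le \tfrac{1}{2} m^{2c}$ unordered pairs $\{x,y\}$ of distinct elements of $S$, and each contributes a nonzero difference $d = x-y$ with $0 < |d| < 2^m$ (adopting the convention that an $m$-bit integer lies in $[0,2^m)$). Since every prime is at least $2$, a nonzero integer of absolute value below $2^m$ has fewer than $m$ distinct prime divisors. A prime is bad exactly when it divides one of these differences, so by subadditivity the number of bad primes is at most $\binom{|S|}{2}\cdot m < \tfrac{1}{2} m^{2c+1}$. Equivalently, every bad prime divides the product $P \df \prod_{\{x,y\}} |x-y| < 2^{m^{2c+1}/2}$, and $P$ has at most $\log_2 P < \tfrac{1}{2} m^{2c+1}$ distinct prime divisors.

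Next I would count the available small primes using a lower bound on prime density. By Chebyshev's bound there is a constant $\gamma > 0$ with at least $\gamma N / \ln N$ primes below $N$ for all $N \ge 2$; the full prime number theorem is not needed. Setting $c_0 \df 2c+2$ and $N \df m^{c_0} = 2^{c_0 \log m}$, the number of primes with at most $c_0 \log m$ bits is at least $\gamma m^{2c+2}/\ln(m^{2c+2})$, which grows like $m^{2c+2}/\log m$. For all sufficiently large $m$ this exceeds the bound $\tfrac{1}{2} m^{2c+1}$ on the bad primes, since the ratio of the two bounds grows like $m/\log m$. Hence some prime $p$ with at most $c_0 \log m$ bits is good, and this $p$ separates $S$, as required.

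I do not expect a genuine obstacle here: the argument is entirely elementary, and its only nontrivial ingredient is the prime-density lower bound, for which Chebyshev's theorem suffices. The two points needing mild care, rather than real work, are fixing the bit-length convention so that all differences stay below $2^m$, and disposing of the finitely many small values of $m$ for which the asymptotic comparison of the two polynomial bounds has not yet taken effect; the latter can be absorbed by enlarging the constant $c_0$.
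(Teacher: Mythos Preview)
The paper does not prove this lemma at all: it is stated with a citation to \cite{FKS} (and \cite{PTV}) and used as a black box in the proof of Lemma~\ref{lemma:compute-isolating}. Your proposal supplies exactly the standard counting argument behind the cited result, and it is correct. The two ingredients---at most $\binom{|S|}{2}\cdot m \le \tfrac{1}{2}m^{2c+1}$ bad primes, versus $\Omega(m^{c_0}/\log m)$ primes of at most $c_0\log m$ bits by Chebyshev---combine just as you describe, and choosing $c_0 = 2c+2$ (then enlarging to absorb small $m$) finishes it.
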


\begin{proofof}{Lemma~\ref{lemma:compute-isolating}}
Let $\beta$ be some constant to be defined later, and let $G = (V,E,F)$ be an adorned graph with $V = \{1, \ldots, N\}$. For each pair $s,t$ of nodes, let $\calP_{s,t}$ be defined as above. We call an element from $\calP_{s,t}$ a \emph{real-partial} $s$-$t$-path. Let $\calP^i_{s,t}$ denote the real-partial $s$-$t$-paths that consist of at most $2^i$ (real) edges, for $i \leq \log N$.

We first describe the proof idea. 
We will choose primes $\tpl p = p_1, \ldots, p_\ell$ with $\ell = \log N$, such that each prime $p_j$ is smaller than $N^{\beta-2}$. Then, each weight assignment $w_j(e) = w_0(e) \bmod p_j$ gives a total weight of less than $N^\beta$ to every element from a set $\calP_{s,t}$.
That means we can consider the weight that $w_{\tpl p}$ assigns to an edge or to an element from $\calP_{s,t}$ to be a number with radix $N^\beta$, and the $j$-th most significant digit is given by the function $w_j$.
So, whether an edge (or a real-partial $s$-$t$-path) is lighter than another edge (or another real-partial $s$-$t$-path) under $w_{\tpl p}$ is determined by the smallest $j$ such that $w_j$ assign a different weight to the edges (or real-partial $s$-$t$-paths).

We choose the primes $\tpl p$ inductively in a way such that the function \[w_i(e) \df \sum_{j = 1}^{i} N^{\beta(\log N - j)}(w_0(e) \bmod p_j)\] strongly real-isolates the sets $\calP^i_{s,t}$.
Then, the claim of the lemma follows.

We start with $i = 0$. Each real-partial $s$-$t$-path in a set $\calP^0_{s,t}$ has at most one real edge, so the number of all these elements, summed over all pairs $s,t$, is bounded by $N^4$. 
So, by Lemma~\ref{lem:FKS} there is a constant $c_1$ and a $c_1 \log N$ bit prime $p_1$ such that $w_1$ strongly real-isolates the sets $\calP^0_{s,t}$.

For the inductive step, let the claim hold for some $i$. Each set $\calP^{i+1}_{s,t}$ consist of real-partial paths that are the concatenation from one real-partial path from $\calP^{i}_{s,v}$ and one real-partial path $\calP^{i}_{v,t}$, for some node $v$. 
Also, the lightest element in $\calP^{i+1}_{s,t}$ under $w_i$ consists of two real-partial paths that are lightest under $w_i$ among the sets $\calP^{i}_{s,v}$ and $\calP^{i}_{v,t}$.
So, each set $\calP^{i+1}_{s,t}$ can contain at most $N$ real-partial paths that are lightest under $w_i$, and there are at most $N^3$ of these elements among all sets $\calP^{i+1}_{s,t}$ in total.
So, by Lemma~\ref{lem:FKS} there is a constant $c_2$ (which is the same for all applications of the inductive step) and a $c_2 \log N$ bit prime $p_{i+1}$ such that $w_{i+1}$ strongly real-isolates the sets $\calP^{i+1}_{s,t}$.

The proof is finished by choosing $\beta$ to be $\max(c_1,c_2)+2$.
\end{proofof}
 
\subsection{Computing weights for bounded-treewidth graphs}
\label{section:isolation:btw}
\newcommand{\tparent}{\text{parent}}

In this section, we show that isolating weights for graphs of bounded treewidth can be computed in \LOGSPACE. As an immediate consequence, reachability can be maintained for such graphs under changes of polylogarithmic size.

A \emph{tree decomposition} $\calT = (T,\calB)$ of a graph $G= (V,E)$ consists of a (rooted, directed) tree $T = (I,F,r)$, with (tree) nodes $I$, (tree) edges $F$, a distinguished root node $r \in I$, and a function $\calB \colon I \rightarrow 2^V$ such that
\begin{enumerate}
 \item[(1)] the set $\{i \in I \mid v \in \calB(i)\}$ is non-empty for each node $v \in V$,
 \item[(2)] there is an $i \in I$ with $\{u,v\} \subseteq \calB(i)$ for each edge $(u, v) \in E$, and
 \item[(3)] the subgraph $T[\{i \in I \mid v \in \calB(i)\}]$ is connected for each node $v \in V$.
\end{enumerate}
We refer to the number of children of a node $i$ of $T$ as its \emph{degree}, and to the set $\calB(i)$ as its \emph{bag}. We denote the parent node of $i$ by $\tparent(i)$.
The \emph{width} of a tree decomposition is defined as the maximal size of a bag minus $1$. The \emph{treewidth} of a graph $G$ is the minimal width among all tree decompositions of $G$.
A tree decomposition is binary, if all tree nodes have degree at most $2$. Its depth is the length of a longest path from the root $r$ to a leaf of $T$. 
We inductively define the \emph{height} $h(i)$ of $i$ to be $1$ if $i$ is a leaf, and $h(i')+1$ if $i$ is an inner tree node and $i'$ is a child of $i$ with maximal height.
For a node $v \in V$ we denote by $B(v)$ the highest bag that contains $u$, and let $h(u) \df h(B(u))$. This bag $B(v)$ is well-defined for each node $v$ thanks to condition (3) of the definition of a tree decomposition.

We usually identify tree nodes $i$ and their bag $\calB(i)$, and use the above notions and measures directly for bags. We also abuse notation and write $B \in \calB$ if $B = \calB(i)$ for some tree node $i$.

As a first step we construct isolation weights for bounded treewidth graphs that additionally have bounded degree. Isolation weights for all graphs of bounded treewidth are provided afterwards.

\begin{restatable}{proposition}{propBtwDWeights}
\label{prop:btw-d-weights}
Let $c, d, k \in \N$ be fixed. Let $G = (V,E)$ be a graph with maximal degree $d$, and let $n$ be the number of its nodes.
Let $\calT$ be a binary tree decomposition of $G$ with width $k$ and depth at most $c \log n$. A polynomially bounded skew-symmetric weight assignment with non-zero circulation for $G$ can be computed in \LOGSPACE.
\end{restatable}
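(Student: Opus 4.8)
The plan is to build the weight assignment bottom-up along the given binary tree decomposition $\calT$, exploiting that every bag is a separator of size at most $k+1$ and that $\calT$ has depth $O(\log n)$, so that the construction passes through only $O(\log n)$ ``levels'' and each level is allowed to blow up the weights by only a constant factor. Throughout I would fix, for every vertex $v$, its \emph{highest} bag $B(v)$ (the bag closest to the root of $\calT$ containing $v$) and, for every edge $e=\{u,v\}$, its highest bag $B(e)$ (closest to the root containing both $u$ and $v$); I say $e$ is \emph{owned} by $B(e)$, so each edge is owned by exactly one bag, each bag owns at most $\binom{k+1}{2}=O(1)$ edges, and $G$ has $O(n)$ edges. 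For a tree node $i$ let $G_i$ be the subgraph of $G$ induced by the vertices occurring in the subtree of $\calT$ rooted at $i$; then $\calB(i)$ separates $G_i$ from the rest of $G$, and separates $G_{i_1}$ from $G_{i_2}$ when $i$ has children $i_1,i_2$. The structural lemma I would prove is that every simple directed cycle $C$ of $\bidirect G$ has a unique topmost tree node $i_C$ whose bag meets $C$; that $C\subseteq G_{i_C}$; and that the at most $k+1$ vertices of $C\cap\calB(i_C)$ cut $C$ into arcs, each of which is either a single edge owned by $i_C$ or a path whose interior lies entirely inside one of the two child subtrees of $i_C$ and which uses only edges owned by strict descendants of $i_C$.

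Processing a node $i$ of height $h$ whose children are already weighted, I would then (1) contract, for each ordered pair $(x,y)$ of vertices of $\calB(i)$ and each child subtree, the lightest $x$-to-$y$ path inside that subtree --- unique by the induction hypothesis --- into a single \emph{virtual} edge carrying that weight, obtaining an auxiliary graph $A_i$ on the $\le k+1$ vertices of $\calB(i)$ with $O(1)$ real edges (those owned by $i$) and $O(1)$ virtual edges; (2) multiply all weights assigned so far to edges of $G_{i_1}$ and $G_{i_2}$ by a fixed constant $M$; and (3) give the real edges of $A_i$ pairwise ``well-separated'' constant-size weights --- the $j$-th real undirected edge getting $\pm\rho_0^{\,j}$ for a constant $\rho_0$ chosen so large that no orientation of a vertex-disjoint union of real edges sums to zero and every such sum has absolute value below $M$ --- using opposite values on the two orientations, which keeps the whole assignment skew-symmetric. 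Correctness then follows from the structural lemma by a case analysis on a cycle $C$ with $i_C=i$: if $C$ uses an edge owned by $i$, the real-edge part of $w(C)$ is a nonzero integer of absolute value $<M$ and hence cannot be cancelled by $M$ times the integral descendant contribution; if $C$ uses no edge owned by $i$ it actually lies inside a single child subtree and is nonzero by induction; and the only remaining case, where $C$ straddles both children through two or more vertices of $\calB(i)$ without using an $i$-owned edge, is ruled out by additionally carrying in the induction hypothesis that, across the two sibling subtrees, the lightest-path weights between shared boundary vertices are pairwise distinct --- arranged by tagging one of every sibling pair with a single distinguishing bit, costing only $O(\log n)$ bits over the whole recursion. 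Feeding the resulting weights into Lemma~\ref{lemma:nonzero-isolating} yields isolating weights and, via Theorem~\ref{theorem:reach-mod2}, the claimed reachability consequence.

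The step I expect to be the main obstacle is keeping the weights polynomially bounded while simultaneously covering every cycle case and re-establishing uniqueness of the lightest boundary paths at each node. Because $\calT$ has $\Theta(\log n)$ levels, the per-level scaling factor $M$ must be an absolute constant, so each level may spend only $O(1)$ bits; in particular one cannot afford any FKS-style rehashing (as used in Lemma~\ref{lem:FKS} and Lemma~\ref{lemma:compute-isolating}), which would cost $\Theta(\log\log n)$ bits per level and inflate the final bound to $n^{\Theta(\log\log n)}$. This is exactly where the bounded width is indispensable: it guarantees that each bag owns $O(1)$ edges and that each auxiliary graph $A_i$ has $O(1)$ edges and $O(1)$ simple cycles, so that uniqueness of lightest boundary paths at node $i$ only has to be decided among $O(1)$ candidates and all tie-breaking can be carried out with the structured ``well-separated weights plus one sibling bit'' scheme using constant-size perturbations; the bounded-degree assumption is then used to keep the graph sparse and the \LOGSPACE\ bookkeeping of the construction under control. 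Granting all this, the final weights are $\rho_0^{O(1)}\cdot M^{O(\log n)}=n^{O(1)}$, skew-symmetric by construction, non-zero on every simple cycle of $\bidirect G$ by the inductive argument, and computable in \LOGSPACE\ since each ingredient --- highest-bag lookup, the $O(\log n)$-fold product, and the constant-size per-bag assignment --- is a simple tree-navigation or arithmetic operation.
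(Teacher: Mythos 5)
There is a genuine gap, and it sits exactly where you yourself flag the main obstacle: cycles whose topmost bag $i_C$ is met by the cycle but which use \emph{no} edge owned by $i_C$. Such cycles cannot be ``ruled out'' --- they exist whenever two bag vertices $x,y\in\calB(i_C)$ are joined by paths through both child subtrees --- and for them your construction gives no leverage: after factoring out the scaling by $M$, their weight is a sum of arc weights coming entirely from the two children, and you must show this sum never vanishes. The invariant you carry (uniqueness of lightest boundary-to-boundary paths inside each subtree, plus pairwise distinctness of the \emph{lightest}-path weights across siblings) is not sufficient for this: the arcs of a cycle are arbitrary paths, not lightest ones, and by skew-symmetry a two-arc cycle cancels precisely when \emph{some} $x$-$y$ path in one child has the same weight as \emph{some} $x$-$y$ path in its sibling; cycles alternating through more bag vertices require control of longer signed sums. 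Nothing in a per-level budget of $O(1)$ bits is shown to control these sums, and the proposed mechanism (one distinguishing bit per sibling pair) is not shown to enforce even the lightest-path distinctness, since path weights are sums of many edges and different paths use different numbers of them. In effect the hypothesis your induction needs is essentially non-zero circulation of the merged subtree itself, so the induction does not close; and, as you note, the FKS-style isolation of Lemma~\ref{lem:FKS} cannot be afforded at every level, so there is no fallback.

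The paper avoids all of this with a non-inductive assignment based on a different ownership rule: each edge of $\bidirect{G}$ is associated with the highest bag containing \emph{one of its endpoints} (not both). This is exactly where the bounded degree $d$ is needed --- it bounds the number of edges associated with a bag by $\beta = 2d(k+1)$, whereas your both-endpoints rule only needs the width --- but it buys the crucial property that the topmost bag met by a simple cycle always owns at least one of the cycle's edges (an edge incident to the cycle's topmost vertex), so the problematic cross-children case never arises. The weight of an edge is then $(4\beta\cdot 3^\beta+2)^{h}\cdot 3^{\ell}$, growing with the height $h$ of its bag and its index $\ell$ within the bag, and a direct geometric-sum estimate shows that the unique heaviest edge of any simple cycle strictly dominates the absolute value of the sum of all the others; no lightest-path uniqueness or tie-breaking invariant is ever maintained (isolation is only obtained afterwards, via Lemma~\ref{lemma:nonzero-isolating}). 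If you keep your bottom-up scaling, switching to the endpoint-based ownership (and using the degree bound) makes your ``cycle uses an $i_C$-owned edge'' case cover all cycles and essentially reproduces the paper's argument with least- instead of most-significant digits; with ownership by both endpoints, the argument cannot be repaired by lightest-path bookkeeping alone.
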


\begin{proof}
The idea for assigning weights is the following. We associate each edge with one bag of the tree decomposition, namely the highest bag that contains one endpoint of the edge. For each bag $B$, we denote the set of all edges that are associated with $B$ by $S(B)$. 
As the width of $\calT$ and the degree of $G$ are bounded by a constant, so is $|S(B)|$. 
An edge $e$ is assigned a weight that depends exponentially on the height of the bag $B$ it is associated with, and also exponentially on its position in some linear order on $S(B)$. For each cycle $C$ there is a unique highest bag $B_C$ that some of the cycle's edges is associated with. The idea for establishing non-zero circulation of $C$ is that its weight is dominated by the weight of the unique edge which (1) is associated with $B_C$ and (2) has largest index in the linear order on $S(B)$ among all edges of the cycle.
As the height of a bag is logarithmic in $n$ and $|S(B)|$ is bounded by a constant, the weight of every edge is polynomial in $n$.

We now proceed to the details. For each $e \in \bidirect{E}$, let $B_{e}$ be the (unique) highest bag that contains one of the end points of $e$. For a bag $B$, define the set $S(B)$ of its associated edges as  $S(B) \df \{e \in \bidirect{E} \mid B = B_{e}\}$. Observe that the sets $S(B)$ partition the set $\bidirect{E}$ of edges and that the size of $S(B)$ is bounded by a constant $\beta \df 2d(k +1)$, as each bag $B$ contains at most $k +1$ nodes and each node has degree at most $d$ in $G$ and therefore degree at most $2d$ in $\bidirect{G}$. For each $S(B)$ we fix an enumeration of its elements\footnote{As we devise an \LOGSPACE algorithm, we can assume the existence of a linear order on the input.}. Now, for each edge $e$, we set $h(e) = h(B_e)$ and $\ell(e) = i$, if $e$ is the $i$-th element in the enumeration of $S(B_e)$.

We set the weight $w(e)$ of an edge $e = (u,v)$ with $u \leq v$ to be $w(e) \df (4\beta \cdot 3^\beta + 2)^{h(e)} \cdot 3^{\ell(e)}$. The weight of an edge $(u,v)$ with $u > v$ is $w(u,v) = -w(v,u)$.
Notice that this weight assignment is polynomially bounded and skew-symmetric and can be computed in \LOGSPACE. We now show that it has non-zero circulation.

Let $\calC$ be any simple cycle in $\bidirect{G}$, and let $e_1, \ldots, e_m$ be an enumeration of its edges. Without loss of generality we assume that $e_1$ is the edge with the maximal weight among all edges in $\calC$. 
This edge is well-defined, as there is a unique highest bag $B$ such that $S(B)$ contains an edge of $\calC$, and the term $4\beta \cdot 3^\beta + 2$ is strictly greater than $3^{\ell(e)}$ for any value of $\ell(e)$.

We show $|w(e_1)| > |w(e_2) + \cdots + w(e_m)|$, which implies the claim.
Actually, we show that the weight of $w(e_1)$ exceeds the combined weight of all other edges $e$ that are either in $S(B)$ and have $\ell(e) < \ell(e_1)$ or are in $S(B')$ for some bag $B'$ below $B$ in the tree decomposition.
Note that there are $\sum_{h=1}^{h(e_1)-1} 2^{h(e_1)-h}$ many of those bags $B'$, each $S(B')$ contains at most $\beta$ edges, and the weight of each edge is upper bounded by $(4\beta\cdot3^\beta+2)^{h(B')}\cdot 3^\beta$. 
 \begin{align*}
|w(e_2) &+ \cdots + w(e_m)| \\ 
&< 
\sum_{i=1}^{l(e_1)-1}(4\beta \cdot 3^\beta+2)^{h(e_1)}\cdot 3^{i}
+\sum_{h=1}^{h(e_1)-1}2^{h(e_1)-h}\cdot \beta\cdot (4\beta\cdot3^\beta+2)^{h}\cdot 3^\beta \allowdisplaybreaks \\ 
&= \sum_{i=1}^{l(e_1)-1}(4\beta \cdot 3^\beta+2)^{h(e_1)}\cdot 3^{i}
+\sum_{h=1}^{h(e_1)-1}2^{h(e_1)}\cdot \beta\cdot (2\beta\cdot3^\beta+1)^{h}\cdot 3^\beta \allowdisplaybreaks \\ 
&= (4\beta \cdot 3^\beta+2)^{h(e_1)}\cdot\sum_{i=1}^{l(e_1)-1} 3^{i}
+2^{h(e_1)}\cdot \beta \cdot 3^\beta \cdot \sum_{h=1}^{h(e_1)-1} (2\beta\cdot3^\beta+1)^{h} \allowdisplaybreaks \\ 
 &=  (4\beta\cdot 3^\beta+2)^{h(e_1)}\cdot \frac{3^{l(e_1)}-3}{2} 
 + 2^{h(e_1)}\cdot \beta \cdot 3^\beta \cdot \frac{(2\beta \cdot 3^\beta + 1)^{h(e_1)}-(2\beta \cdot 3^\beta + 1)}{2\beta \cdot 3^\beta}  \\
&< (4\beta \cdot 3^\beta+2)^{h(e_1)}\cdot 3^{\ell(e_1)} = |w(e_1)| \qedhere
\end{align*}
\end{proof}

Non-zero circulation weights cannot only be computed in \LOGSPACE for bounded-treewidth graphs with bounded degree, as given by  Proposition~\ref{prop:btw-d-weights}, but also for all graphs with bounded treewidth.
Also, using a result of Elberfeld, Jakoby and Tantau \cite{ElberfeldJT10}, no tree decomposition needs to be given as input.

\begin{restatable}{theorem}{theoremBtw}
\label{theorem:btw}
Let $k \in \N$ be fixed and let $G = (V,E)$ be a graph with treewidth at most $k$. A polynomially bounded skew-symmetric weight assignment with non-zero circulation for $G$ can be computed in \LOGSPACE.
\end{restatable}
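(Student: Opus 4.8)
The plan is to reduce the statement to the bounded-degree case of Proposition~\ref{prop:btw-d-weights} by a degree reduction that \emph{splits every vertex of $G$ along a tree decomposition}. First I would use the result of Elberfeld, Jakoby and Tantau~\cite{ElberfeldJT10} to compute, in \LOGSPACE, a binary tree decomposition $\calT=(T,\calB)$ of (the underlying undirected graph of) $G$ of width bounded by a constant depending only on $k$ and of depth $\bigO(\log n)$. For a vertex $v$ let $T_v$ be the connected subtree of $T$ induced by the nodes $i$ with $v\in\calB(i)$, and for an edge $e=\{u,v\}$ of $G$ let $i^\ast(e)$ be the node of $T_u\cap T_v$ closest to the root of $T$; this node exists since $T_u\cap T_v$ is a nonempty connected subtree of $T$.

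Next I would build the split graph $G'$: its vertices are the pairs $v_i$ with $v\in V$ and $i\in T_v$, it has a \emph{gadget edge} $\{v_i,v_j\}$ for every tree edge $\{i,j\}$ of $T_v$, and a \emph{real edge} $\{u_{i^\ast(e)},v_{i^\ast(e)}\}$ for every edge $e=\{u,v\}$ of $G$. Two observations are key. First, $G'$ has bounded maximum degree: the gadget neighbours of $v_i$ are the (at most three) neighbours of $i$ in the binary tree $T_v$, and a real edge at $v_i$ joins $v$ to a neighbour $u$ of $v$ with $i\in T_u$, hence with $u\in\calB(i)$, so there are at most $|\calB(i)|-1$ of them. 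Second, the \emph{same} tree $T$ carries a binary tree decomposition of $G'$, of bounded width and depth $\bigO(\log n)$, obtained by assigning to node $i$ the bag $\{v_i\mid v\in\calB(i)\}\cup\{v_{\tparent(i)}\mid v\in\calB(i)\cap\calB(\tparent(i))\}$; checking the tree-decomposition axioms is direct (gadget edges $\{v_i,v_{\tparent(i)}\}$ land in the bag of $i$, real edges $\{u_{i^\ast},v_{i^\ast}\}$ land in the bag of $i^\ast$, and the bags containing a fixed $v_i$ form a star around $i$). All of $G'$ and this decomposition is \LOGSPACE-computable, and $|V(G')|$ is polynomial in $n$.

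I would then run the construction from the proof of Proposition~\ref{prop:btw-d-weights} on $G'$ with this decomposition, but override the weight of every gadget edge to $0$ (keeping the prescribed weights on the real edges); call the result $w'$. It is skew-symmetric and polynomially bounded, and — revisiting the correctness argument of Proposition~\ref{prop:btw-d-weights} — it still has non-zero circulation on $\bidirect{G'}$: every simple cycle of $\bidirect{G'}$ uses a real edge, and its heaviest real edge still strictly dominates the combined weight of the remaining edges (those are gadget edges, which contribute $0$, or real edges lying in strictly lower bags, or earlier in the enumeration of the same bag). Finally I would push $w'$ down to $G$ by setting $w(u,v)\df w'(u_{i^\ast(e)},v_{i^\ast(e)})$ and $w(v,u)\df-w(u,v)$ for each edge $e=\{u,v\}$; this is skew-symmetric, polynomially bounded in $n$, and \LOGSPACE-computable. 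To see non-zero circulation of $w$, lift a simple directed cycle $C=v_0\to v_1\to\cdots\to v_{m-1}\to v_0$ of $\bidirect{G}$ to $\bidirect{G'}$ by traversing, for each step $v_a\to v_{a+1}$, the corresponding real edge followed by the unique simple path inside the (tree-shaped) gadget of $v_{a+1}$ that joins the arrival and departure copies of $v_{a+1}$. Since $C$ is simple this lift is a simple cycle of $\bidirect{G'}$, and as gadget edges weigh $0$ its $w'$-weight equals $w(C)$; hence $w(C)\neq0$.

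I expect the main obstacle to be the split-graph construction, and specifically the simultaneous control of degree and treewidth: a naive degree reduction that replaces a vertex by a path or cycle with one copy per incident edge bounds the degree but need not preserve bounded treewidth, whereas splitting along the decomposition preserves treewidth for free but bounds the degree only because of the observation that each copy $v_i$ inherits at most $|\calB(i)|-1$ real edges. A second point that needs care is that zeroing the gadget weights does not destroy non-zero circulation: here the crucial structural feature is that a real edge of $G'$ has both endpoints indexed by the \emph{same} tree node, so real edges never connect incomparable parts of the decomposition; consequently a simple cycle of $\bidirect{G'}$ cannot ``turn around'' at the top of its span using only gadget edges, so its highest bag carries one of its real edges and the domination estimate of Proposition~\ref{prop:btw-d-weights} goes through. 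Verifying the tree-decomposition axioms for $G'$, the \LOGSPACE bounds, and the weight estimates is then routine.
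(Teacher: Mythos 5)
Your reduction to Proposition~\ref{prop:btw-d-weights} via the split graph $G'$ and the logarithmic-depth, bounded-width decomposition carried by the same tree is exactly the route the paper takes, and that part is fine. The gap is in the weight-transfer step: zeroing the gadget edges and pushing only the real-edge weight down to $G$ is not sound, and the justification you give --- that a simple cycle of the bidirected $G'$ ``cannot turn around at the top of its span using only gadget edges, so its highest bag carries one of its real edges'' --- is false. A cycle can cross its topmost tree node several times, each time through the gadget of a \emph{different} original vertex, so that all edges it uses at the top level are zero-weight gadget edges while its real edges sit in pairwise incomparable bags strictly below. Concretely, let $G$ be the $4$-cycle $v{-}x{-}u{-}y{-}v$ with the decomposition whose root $i_0$ has bag $\{u,v\}$ and whose two children $a,b$ have bags $\{u,v,x\}$ and $\{u,v,y\}$. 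Then the real edges for $\{v,x\},\{x,u\}$ live at $a$, those for $\{u,y\},\{y,v\}$ at $b$, and the lift of the $4$-cycle is the simple cycle $v_a \to x_a \to u_a \to u_{i_0} \to u_b \to y_b \to v_b \to v_{i_0} \to v_a$: it meets the root level only inside the two gadget detours through $u_{i_0}$ and $v_{i_0}$, so no real edge of this cycle is associated with the root bag. Its real edges lie in the two sibling bags $a$ and $b$, which have the same height and independent enumerations, so the heaviest real edge need not dominate (two of them can have weights $(4\beta\cdot 3^{\beta}+2)^{h}\cdot 3^{\ell}$ with identical $h$ and $\ell$), and for a perfectly admissible vertex order and enumeration (e.g.\ $v_a<x_a<u_a$ with $(v_a,x_a),(x_a,u_a)$ enumerated first and second at $a$, and $v_b<y_b<u_b$ with $(y_b,u_b),(v_b,y_b)$ first and second at $b$) the four real-edge contributions are $+X\cdot 3, +X\cdot 9, -X\cdot 3, -X\cdot 9$ and cancel exactly. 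Your pushed-down $w$ then gives the original $4$-cycle circulation $0$, so both the domination argument and the conclusion it is meant to support fail; nothing in your construction excludes such orders or enumerations.

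The paper avoids this precisely by \emph{not} discarding the gadget weights: it applies Proposition~\ref{prop:btw-d-weights} to all of $G'$ (gadget edges included), which kills the cycle above because the topmost bag of any simple cycle of $G'$ then does carry a dominant-weight edge, possibly a gadget edge. The transfer to $G$ is done by setting $w(u,v)$ to the sum of $w'$ over a canonical path $P(u,v)$ in $G'$ consisting of the real edge together with the vertical gadget edges inside the gadgets of $u$ and $v$; the lift of a simple cycle of $G$ is then a closed walk in $G'$ of the same total weight, which need not be simple, but its repeated visits occur only as out-and-back excursions along gadget paths, whose weight is $0$ by skew-symmetry, so it can be pruned to a simple cycle of $G'$ of the same weight, and non-zero circulation of $w'$ on all of $G'$ finishes the proof. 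If you want to keep your ``copy only the real-edge weight'' transfer, you would need a substitute for exactly this mechanism, i.e.\ an argument that handles simple cycles of $G'$ whose real edges are spread over incomparable bags --- which, as the example shows, the bag-height/index domination alone does not provide.
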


The idea for proving Theorem~\ref{theorem:btw} is as follows. From a given graph $G$ with treewidth at most $k$ we construct a graph $G'$ with treewidth and degree $\bigO(k)$ as well as a tree decomposition. The graph $G'$ basically results from a tree decomposition $\calT$ of $G$ by making a copy of a node $v$ for every bag of $\calT$ that contains $v$. These copies are connected by an edge if the corresponding bags in $\calT$ are.
Using Proposition~\ref{prop:btw-d-weights}, we obtain non-zero circulation weights for $G'$, and we show that they can be translated to non-zero circulation weights for~$G$.

\begin{proof}
Fix $k \in \N$ and let $G = (V,E)$ be a graph with treewidth at most $k$. Let $n$ be the size of $V$.
There are constants $c_1, c_2 \in \N$ that only depend on $k$ such that a binary tree decomposition $\calT = (T, \calB)$ of $G$ of width at most $c_1 k$ and depth at most $c_2 \log n$ can be computed in \LOGSPACE \cite{ElberfeldJT10}.
From $G$ and $\calT$ we construct a graph $G' = (V',E')$ as follows.
Let $V'$ be the set $V' \df \{v_B \mid B \in \calB, v \in B\}$ and let $E' \df \{(v_B,v_{B'}) \mid B' = \tparent(B)\} \cup \{(u_B,v_B) \mid (u,v) \in E, u \not\in \tparent(B) \text{ or } v \not\in \tparent(B)\}$.
So, we have one copy $v_B$ of a node $v \in V$ for each bag $B$ such that $v$ is contained in $B$. Two copies of a node are connected by an edge if they originate from adjacent bags in the tree decomposition, and there is an edge between two copies $u_B$ and $v_B$, originating from the same bag $B$, if $B$ is the highest bag of $\calT$ that contains both endpoints $u$ and $v$.

The degree of $G'$ is bounded by $c_1 k + 3$. The tree decomposition $\calT' = (T, \calB')$ that replaces each bag $B$ of $\calT$ by $\{v_B \mid v \in B\} \cup \{v_{\tparent(B)} \mid v \in \tparent(B)\}$ is a tree decomposition of $G'$ and has width at most $2 c_1 k + 1$. Furthermore, it is binary and has depth at most $c_2 \log n$.
So, by Proposition~\ref{prop:btw-d-weights}, one can compute in \LOGSPACE polynomially bounded, skew-symmetric non-zero circulation weights $w'$ for $G'$.

We construct a weight function $w$ for $\bidirect{G}$ as follows. For that, we associate with each edge $(u,v) \in \bidirect{E}$ a sequence $P(u,v)$ of edges in $\bidirect{G'}$. Recall that for each edge $(u, v)$ there is a highest bag in which both $u$ and $v$ appear. The bag above that bag contains either (a) none of the two vertices, or (b) $v$ but not $u$, or (c) $u$ but not $v$. The definition of $P(u, v)$ distinguishes these three cases:
\begin{enumerate}
 \item Suppose $B(u) = B(v)$. We set $P(u,v) = (u_{B(u)},v_{B(v)})$.
 \item Suppose $B(u)$ is a proper descendant of $B(v)$. Let $B = B(u)$ and $B' = B(v)$. 
 We set $P(u,v) = (u_{B},v_{B}), (v_B,v_{\tparent(B)}), \ldots, (v_{\tparent(\cdots(\tparent(B)))},v_{B'})$.
 \item Suppose $B(v)$ is a proper descendant of $B(u)$. Let $B = B(u)$ and $B' = B(v)$. 
 We set $P(u,v) = (u_B,u_{\tparent(\cdots(\tparent(B')))}), \ldots, (u_{\tparent(B')},u_{B'}), (u_{B'},v_{B'})$.
\end{enumerate} 
Now, let $w(u,v)$ be the sum $\sum_{e \in P(u,v)} w'(e)$ of the weights of the edges $e$ in $P(u,v)$.
Because $w'$ is a polynomially bounded skew-symmetric weight assignment, so is $w$. 

It remains to show that $w$ has non-zero circulation.
Let $\calC$ be an arbitrary simple cycle in $\bidirect{G}$. We need to show that $w$ assigns a non-zero weight to $\calC$.
Let $e_1, \ldots, e_m$ be the sequence of edges that constitutes $\calC$, and let $\calW'$ be the sequence $P(e_1), \ldots, P(e_m)$ of edges. 
By definition, the weight of $\calC$ under $w$ is the same as the weight of $\calW'$ under $w'$.

Note that $\calW'$ constitutes a cycle in $\bidirect{G'}$ which is not necessarily simple: some nodes might be visited more than once.
We show that we can construct from $\calW'$ a simple cycle $\calC'$ by removing parts of $\calW'$ with total weight $0$.
As a result, $\calC'$ has the same weight as $\calW'$ under $w'$. Because $w'$ has non-zero circulation, the weight of $\calC'$ and $\calW'$ is non-zero, and so is the weight of $\calC$ under $w$.

Suppose that some node $u_B$ is visited twice by $\calW'$. Then $\calW'$ has the subsequence $P(v,u)P(u,v')$ for some nodes $v$ and $v'$, because $u$ is visited only once in $\calC$ and no node appears twice in a single sequence $P(u,u')$.
Moreover, it most be that $h(u)$ is greater than both $h(v)$ and $h(v')$, or smaller than both $h(v)$ and $h(v')$, and either $B(v)$ is a descendent of $B(v')$ or $B(v')$ is a descendant of $B(v)$. 
We consider the case that $h(u)$ is greater than both $h(v)$ and $h(v')$, and $B(v')$ is a descendant of $B(v)$. The other cases are analogous.
Then $P(v,u)P(u,v')$ visits the nodes $v_{B(v)},u_{B(v)}, u_{\tparent(B(v))}, \ldots, u_{B(u)}, \ldots u_{\tparent(B(v))}, u_{B(v)}, \ldots, u_{B(v')}, v'_{B(v')}$ in that order. 
The closed walk from $u_{B(v)}$ to $u_{B(u)}$ and back to $u_{B(v)}$ has, because of skew-symmetry, a total weight of $0$ under $w'$. So, the corresponding edges can be removed from $\calW'$ without changing the weight. Repeating this step results in a simple cycle $\calC'$ with the same weight under $w'$ as $\calC$ under $w$. 
As $w'$ has non-zero circulation, the weight of $\calC'$ is non-zero, and so is the weight of $\calC$.
\end{proof}

\section{Conclusion}\label{section:conclusion}
The complexity of maintaining (variants of) the reachability query is the dominant research question in dynamic complexity theory.
With this paper we basically settle this question for reachability in undirected graphs, at least with respect to the size of a change: reachability in undirected graphs is in \DynFOar if and only if the changes have at most polylogarithmic size.
For reachability in directed graphs, we can only show this for insertions of polylogarithmic size, and the main open problem is whether this can be extended to also allow for deletions of single edges, non-constantly many edges, or even polylogarithmically many edges. 

We give preliminary results for classes of graphs for which non-zero circulation weights can be computed in \AC: reachability for these graphs is in \DynFOParityar under insertions and deletions of polylogarithmic size.
We show that one can compute such weight assignments for graphs with bounded treewidth.
Other graph classes for which this is possible include the class of planar graphs \cite{TewariV12}, and in general all graphs with bounded genus, which one can show using results from \cite{DattaKTV12}.

A question for further research is whether reachability for classes of directed graphs can be maintained in \DynFOar under insertions and deletions of polylogarithmic size. Candidate classes are graphs with bounded treewidth, and directed acyclic graphs.

\bibliography{bibliography}

\end{document}